\documentclass[11pt, letterpaper]{article}
\usepackage[margin=0.65in]{geometry}
\usepackage[
    backend=biber,
    style=numeric,
    natbib=true,
    backref=true,
    giveninits=true,
    uniquename=init
]{biblatex}
\usepackage{enumitem}
\usepackage{graphicx}
\usepackage{amsmath,amsthm,amssymb,mathtools}
\usepackage{array}
\usepackage{booktabs}
\usepackage{pifont}
\usepackage{paralist}
\usepackage{longtable}
\usepackage{colortbl}
\makeatletter
\@ifundefined{insert@pcolumn}{\let\insert@pcolumn\insert@column}{}
\providecommand\UseTaggingSocket[1]{}
\providecommand\CT@tbl@gdecr@row@count{}
\makeatother
\usepackage{hhline}
\usepackage{bm}
\usepackage{xspace}
\usepackage{url}
\usepackage{boxedminipage}
\usepackage{wrapfig}
\usepackage{subfigure}
\usepackage{ifthen}
\usepackage{color}
\usepackage{xcolor}
\usepackage{framed}
\usepackage{algorithm}
\usepackage{algpseudocode}
\usepackage{float}
\usepackage{rotating}

\usepackage[colorlinks=true,pdfpagemode=UseNone,urlcolor=blue,linkcolor=blue,citecolor=violet,pdfstartview=FitH]{hyperref}

\newcommand{\interior}[1]{\textrm{int}}

\newcommand{\ignore}[1]{}

\newcommand{\cD}{\mathcal{D}}

\newcommand{\cG}{\mathcal{G}}

\newcommand{\CC}{\mathbb C}

\newcommand{\eps}{\varepsilon}

\newcommand{\poly}{\mathrm{poly}}

\newcommand{\EX}{\hbox{\bf E}}

\newcommand{\Sec}[1]{\hyperref[sec:#1]{\S\ref*{sec:#1}}} 
\newcommand{\Eqn}[1]{\hyperref[eq:#1]{(\ref*{eq:#1})}} 
\newcommand{\Fig}[1]{\hyperref[fig:#1]{Fig.\,\ref*{fig:#1}}} 
\newcommand{\Tab}[1]{\hyperref[tab:#1]{Tab.\,\ref*{tab:#1}}} 
\newcommand{\Thm}[1]{\hyperref[thm:#1]{Theorem\,\ref*{thm:#1}}} 
\newcommand{\Fact}[1]{\hyperref[fact:#1]{Fact\,\ref*{fact:#1}}} 
\newcommand{\Lem}[1]{\hyperref[lem:#1]{Lemma\,\ref*{lem:#1}}} 
\newcommand{\Prop}[1]{\hyperref[prop:#1]{Prop.~\ref*{prop:#1}}} 
\newcommand{\Cor}[1]{\hyperref[cor:#1]{Corollary~\ref*{cor:#1}}} 
\newcommand{\Conj}[1]{\hyperref[conj:#1]{Conjecture~\ref*{conj:#1}}} 
\newcommand{\Def}[1]{\hyperref[def:#1]{Definition~\ref*{def:#1}}} 
\newcommand{\Alg}[1]{\hyperref[alg:#1]{Alg.~\ref*{alg:#1}}} 
\newcommand{\Clm}[1]{\hyperref[clm:#1]{Claim~\ref*{clm:#1}}} 
\newcommand{\Obs}[1]{\hyperref[obs:#1]{Observation~\ref*{obs:#1}}} 
\newcommand{\Rem}[1]{\hyperref[rem:#1]{Remark~\ref*{rem:#1}}} 
\newcommand{\Con}[1]{\hyperref[con:#1]{Construction~\ref*{con:#1}}} 
\newcommand{\Step}[1]{\hyperref[step:#1]{Step~\ref*{step:#1}}} 
\newcommand{\Assumption}[1]{\hyperref[assm:#1]{Assumption\,\ref*{assm:#1}}} 

\makeatletter

\newcommand{\ProbabilityRender}[2]{
  \@ifnextchar\bgroup%
  {\renderwithdist{#1}{#2}}
   {\singlervrender{#1}{#2}}
}
\newcommand{\singlervrender}[2]{%
   \ensuremath{\mathchoice
       {{#1}\left[ #2 \right]}
       {{#1}[ #2 ]}
       {{#1}[ #2 ]}
       {{#1}[ #2 ]}
   }
}
\newcommand{\renderwithdist}[3]{%
   \@ifnextchar\bgroup
   {\superfancyrender{#1}{#2}{#3}}
   {\ensuremath{\mathchoice
      {\underset{#2}{#1}\left[ #3 \right]}
      {{#1}_{#2}[ #3 ]}
      {{#1}_{#2}[ #3 ]}
      {{#1}_{#2}[ #3 ]}
     }
   }
}
\newcommand{\superfancyrender}[5]{
   \ensuremath{\mathchoice
      {\underset{#1}{{#1}}\left#4 #3 \right#5}
      {{#1}_{#2}#4 #3 #5}
      {{#1}_{#2}#4 #3 #5}
      {{#1}_{#2}#4 #3 #5}
   }
}

\newcommand{\mainproc}{\texttt{CohortRecovery}}
\newcommand{\contain}{\texttt{bag}}

\newtheorem{theorem}{Theorem}[section]

\newtheorem{lemma}[theorem]{Lemma}
\newtheorem{claim}[theorem]{Claim}
\newtheorem{definition}[theorem]{Definition}

\newcolumntype{H}{>{\setbox0=\hbox\bgroup}c<{\egroup}@{}}

\title{Finding Many Overlapping Dense Subgraphs Using Triadic Cohorts}

\author{
  Sabyasachi Basu \\
  Microsoft Research \\
  \texttt{sabyasachi.basu@microsoft.com}
  \and
  C. Seshadhri \\
  UC Santa Cruz \\
  \texttt{sesh@ucsc.edu}
}
\date{}

\begin{document}
\maketitle
\begin{abstract}
Graphs are a standard representation for data in the social sciences, cybersecurity, computer infrastructure,
bioinformatics, and more. Typical real-world graphs are sparse,
meaning the average degree is small (in the tens, while the number of vertices is more than
millions). When graph data is collected from a source, a major task is to perform data exploration.
Thus, any region of ``density" is of interest, since it indicates special structure.

An important goal is to cover a significant portion of the graph using dense subgraphs. Existing algorithms
that find many dense subgraphs do not have overlapping output, and hence provide limited coverage. Other
methods that produce overlapping clusters do not generate dense subgraphs. The main goal of this paper
is to develop provable and practical methods that can provide overlapping dense subgraphs that can cover
large portions of real-world networks.

Our contribution is an algorithm \mainproc{} that achieves this goal. We first develop a mathematical
framework of \emph{triadic cohorts} that captures the notion of ``detectable dense subgraphs'' that potentially
overlap. We prove that \mainproc{} can output a set of dense subgraphs, such that each triadic cohort is
almost completely contained in some dense subgraph. We give
a practical implementation of \mainproc{} and demonstrate it on a variety of datasets. It typically runs
in under ten minutes on a commodity machine even on graphs with tens of millions of edges. For numerous datasets, \mainproc{} is able to cover more than 25\% of the vertices
in non-trivial subgraphs of density more than 0.8, and is significantly better than a wide variety of scalable graph clustering/community detection
algorithms. Moreover, we demonstrate that output of \mainproc{}
captures ground truth clusters obtained by manual curation.
\end{abstract}
		
\maketitle

\section{Introduction}

The explosive growth of network science as a discipline has been fueled by the wide
variety of datasets that can be represented as large graphs. 
When graph data is collected from a source, a major task is to perform data exploration, often in the context
of unsupervised machine learning.
Real-world graphs typically have low average degree (in the tens), while having
millions or more vertices.
Thus, any region of ``density'' usually indicates special or uncommon structure.
This problem is often given the umbrella name of \emph{dense subgraph discovery}.
It has been used for biological data extraction~\cite{ZH05,FNBB06}, graph compression~\cite{BC08},
detecting social groups~\cite{Beal2003CohesionAP,forsyth2010group}, detecting link farms~\cite{KRRT99, DGP07, GKT05},
graph visualization~\cite{AhDBV05}, financial data analysis~\cite{DJDLT09}, and 
improving the throughput of servers~\cite{GJLSW13}.
This topic has been studied intensely from the early days of graph mining~\cite{LeRu+10}.

Let $G = (V,E)$ denote an input undirected, unweighted graph, and $E(S)$ denote the number of edges
in a subset $S$ of vertices.
We define the \emph{density} of $S$ as $E(S)/\binom{|S|}{2}$.
Since most real-world graphs have densities well below $10^{-4}$,
a set with density of (say) $0.5$ is of interest. There are many formal objectives
for the largest dense set, or the densest set with a size threshold~\cite{Gol84,AndC09}.
These objectives, even with approximations, are typically NP-hard~\cite{Håstad1999, Fe02, Khot06}.
There is a rich literature of practical heuristics~\cite{Charikar, GKT05, AndC09, Flowless, Ts15, Tsou14, SEF16, KS22, DCS17, Sot20, BGMT21, CQT22}
(refer to recent survey~\cite{LMFB23}). 
A drawback of most methods is that they are geared towards finding one (or a few) large dense subgraphs.
From a data analysis perspective, one 
wants to recover information about many dense subgraphs instead of just a single ``optimal''.
In such settings, one wishes to cover large portions
of the graph using dense subgraphs, to explain or get higher level representations of the 
input. Finding just a few optimally densest sets has limited utility, since vast portions of the graph cannot be
covered. In general, sets of high density typically have size in order of tens~\cite{SaSePi+15,Ts15,BaPeQi+24},
so we want to find \emph{thousands} of such dense sets.

There are recent results on this problem~\cite{GKT05, SaSePi+15,SEF16,Sot20, BGMT21, BaPeQi+24}. 
But a crucial aspect ignored by many results (notable exceptions are~\cite{AGM12,XLP+25}) is that of \emph{overlap}. Methods that find
many dense components in a graph usually construct partitions. One expects
real-world data to have significant overlap in dense clusters. For example, we note that in many datasets with ground truth clusters~\cite{snapnets}, more than 30\% of the graph belong to multiple clusters (refer to  \Tab{gt-membership}). 

Our motivating problem is: \emph{can we design a scalable algorithm that can cover large
portions of real-world graphs with overlapping dense subgraphs?} The primary goal
is to maximize the covered portion, potentially exploiting density to increase that coverage.

To give our problem and approach a theoretical grounding, we wish to
build a mathematical framework of recovering overlapping dense
subgraphs, that leads to practical algorithms. Our ideal aim is to formally show that 
all ``well-structured'' dense subgraphs can be approximately recovered, and demonstrate the
algorithm empirically. 
\begin{figure*}
    \centering
    \includegraphics[width = 0.46\linewidth]{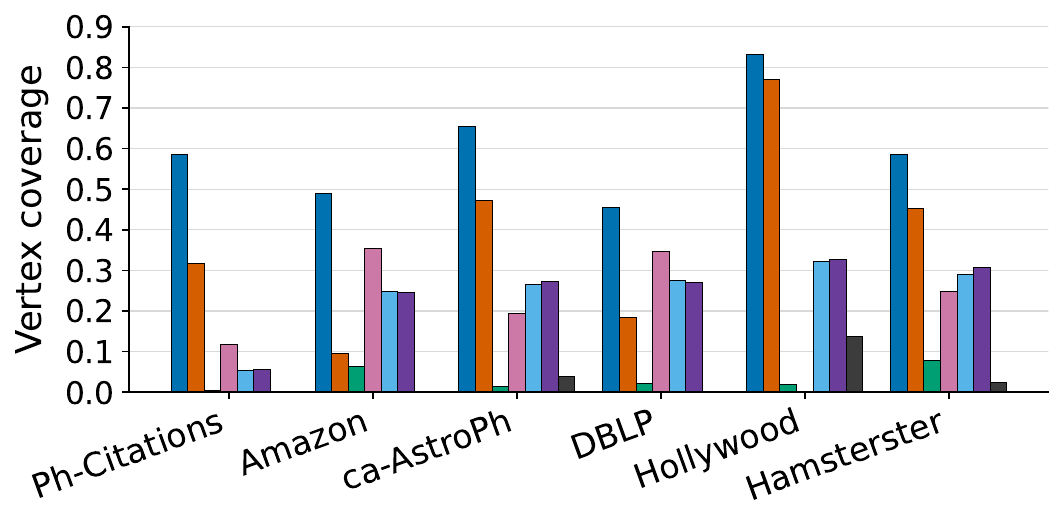}
    \includegraphics[width = 0.46\linewidth]{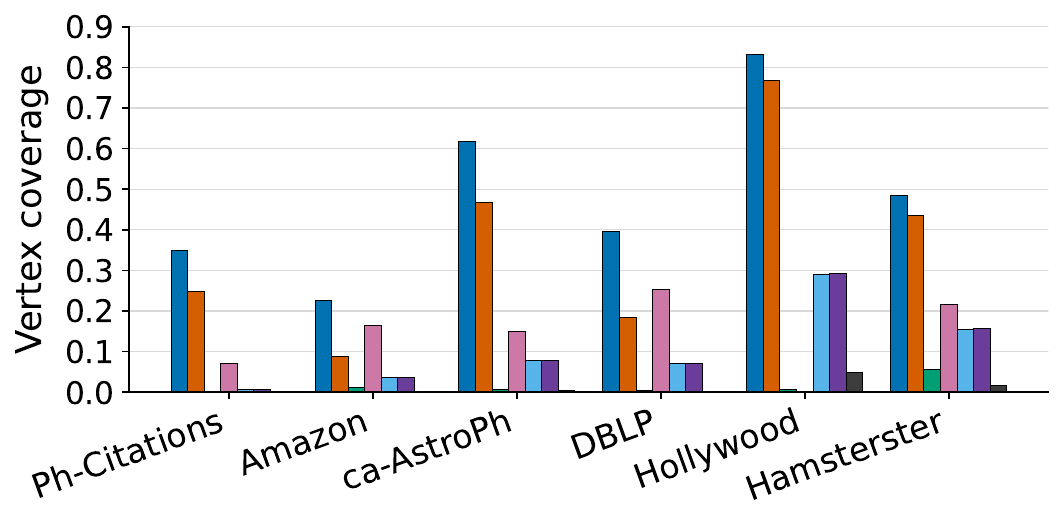}
    \includegraphics[width = 0.6\linewidth]{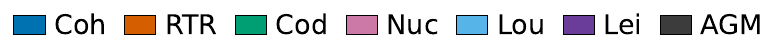}
    \caption{For an (overlapping or otherwise) clustering method, the \emph{coverage} is the fraction of vertices contained in dense clusters of size at least 5. 
    We focus on density at least 0.5 (left) and 0.8 (right). Our algorithm \mainproc\ is labeled as Coh. We compare to RTREx (RTR~\cite{BaPeQi+24}), CoDeSEG (Cod~\cite{XLP+25}), Nucleus (Nuc~\cite{SaSePi+15}), Louvain (Lou~\cite{Louvain}), Leiden (Lei~\cite{Leiden}), and a PageRank technique (AGM~\cite{AGM12}). Across the datasets, we see that the
    CohortRecovery algorithm has a better coverage, and in many cases, has 10\% more coverage than any competing algorithm. We also observe that no previous algorithm
    has good performance on all datasets.}
    \label{fig:coveragebar}
    \vspace{-.5cm}
\end{figure*}
\subsection{Our Contributions} \label{sec:contributions}

We define a new formulation of \emph{triadic cohorts}, that captures a notion of 
recoverable dense substructures. We design an algorithm \mainproc{} that can provably
discover all such structures. \mainproc{} has excellent empirical behavior, and can cover
large fractions of real-world graphs by dense subgraphs. We detail our contributions below.

\begin{asparaitem}
\item {\em Triadic cohorts:} There is a rich history of results exploiting triangle
    structure for dense subgraph discovery~\cite{SaSePi+15,Ts15,BeGlLe16,BBS24,BaPeQi+24}. Our theoretical framework 
    defines a notion of triangle-rich dense subgraphs called triadic cohorts. The 
    main problem is to extract an approximate cover that contains \emph{every} possible
    triadic cohort. We give a provable algorithm \mainproc{} for this problem that runs in polynomial time.
    (The running time is subquadratic in the triangle count.) While there are many existing notions of 
	dense subgraphs, the cohorts notion allows for (nearly exact) recovery of all cohorts in polynomial time.
\item {\em Practical implementation:} Our main practical contribution is an
 implementation of \mainproc. We do a detailed evaluation on a large number of datasets.
    \mainproc\ typically ran in minutes, and always
    terminated within an hour. In our largest dataset, the \texttt{orkut} network with 117 million edges, \mainproc\ took less than eight minutes on commodity
    hardware. 
\item {\em Excellent coverage:} The true test of \mainproc{} is the ability to cover
    a graph with large, dense subgraphs. The output of \mainproc{} is a collection of subsets
    of vertices. Consider \Fig{coveragebar}. We compute the \emph{coverage} of these subsets,
	which is the fraction of the vertices present in subsets of sufficiently high density.
	(We remove all subsets of size at most four, to prevent false coverage by small sets.)
    We compute the coverage of dense subgraphs of density at least $0.5$ (respectively, $0.8$) and
    at least five vertices. We compare with a large suite of existing algorithms that use triangle
	information (Nucleus, RTR)~\cite{SaSePi+15, BaPeQi+24}, modularity maximization (Louvain, Leiden)~\cite{Louvain, Leiden},
	and methods that explicitly find overlapping clusters (PageRank methods, CoDeSEG~\cite{AGM12, XLP+25}).
	(We explain more in related work, but graph embedding methods and Graph Neural Networks typically do not scale
	to our data sizes, and require significant cost and specialized hardware~\cite{WLB+20,TPPM23,Goel2025}.)	
    We experiment on a large collection of 18 datasets (full data in~\Tab{cohorts-coverage}, but we
	show representative results in \Fig{coveragebar}).
    The coverage of \mainproc{} is the highest in almost all the 18 datasets, and no previous algorithm
	performs well on the coverage metric for all datasets. In numerous data sets, such as livejournal, large-dblp, and Berkstan,
    the coverage of \mainproc{} is nearly double of the previous methods. Though previous methods
	were not necessarily optimized for coverage, but we believe that coverage is a fundamental metric for a dense
	subgraph discovery algorithm.

\item {\em Ground truth comparisons:} We run all the methods on datasets which have curated ground truth clusters, based on network metadata like node category, or social network groups.
(We acknowledge that this data is imperfect, and is mostly a guide on relevant parts of the network ~\cite{ground}.) 
    The output of \mainproc{} captures a significant fraction of the ground truth clusters. We evaluate this by computing the fraction of an output cluster
that is contained in a ground truth cluster. We observe that \mainproc{} performs the best in all but one of the datasets (\Tab{gt-bc} in \Sec{findings}). In numerous datasets, \mainproc{} had an improvement of 7 percent over previous methods. In addition, we consider simple Stochastic Block Models with and without overlap, to validate that \mainproc{} can recover
the ground truth structure. We see that \mainproc{} has near perfect recovery, while other methods typically fail when there are overlapping blocks (\Tab{sbm-app} referenced in \Sec{findings}).
\end{asparaitem}

\subsection{Related Work}
The problem of dense subgraph discovery has been studied by a diverse group of researchers in both theory and applied communities. Most reasonable formulations are NP hard, and a lot of research has gone into approximation algorithms. We refer to a number of surveys on dense subgraph discovery for a discussion on the theory~\cite{Fang20, LMFB23}.

Dense subgraph discovery is also extremely well-studied in the data mining community, where a series of results have demonstrated a suite of algorithms for this task~\cite{Ts15, Flowless, denser, KS24} leveraging cliques, quasi-cliques, as well as core and truss based decompositions ~\cite{AndC09, WZTT10, WC12, Huang_2017, XMFB23}. An ever-growing series of results also use triangles to find dense clusters in large networks ~\cite{SaSePi+15,Ts15,BeGlLe16,TPM17, VGW18, KS20, KS22}. We note that algorithms studied in the community detection literature, such as the Louvain~\cite{Louvain} and Leiden~\cite{Leiden} algorithms often perform well at this task; we point the reader to the survey by Jin et al.~\cite{Jin_2021} for a detailed look at different community detection methods. Overlapping clusters are far less well understood. Only a handful of results systematically study this~\cite{Khandekar2013,AGM12,Gargi_Lu_Mirrokni_Yoon_2021, Dondi2021, GGT16, He2023-xn}, and even fewer provide scalable implementations that work for real world networks. The work of Yang and Leskovec~\cite{YL12, ground} has attempted to model community overlaps in real-world networks. Overlapping subgraph recovery has been approached in more recent work as well, such as the information theoretic approaches of Xian et al.~\cite{XLP+25}. A separate line of work attempts to produce overlapping clusters by leveraging maximal cliques in graphs~\cite{WCL+17,CliquePercolation,wen2016maximal, BBC+24, QC}. However, maximal clique enumeration and its extensions are well known NP-hard problems, and their relaxations continue to be expensive in practice. Moreover, such approaches invariably produce many more clusters, and introduce massive redundancy in terms of the number of clusters a vertex participates in.

There is a rich body of work that studies the degree densest subgraph problem and variants of it with size constraints~\cite{Charikar, AndC09, Flowless, Ts15, Tsou14, KS22, DCS17, CQT22, denser}. 
Past work~\cite{BaPeQi+24} has shown that iterative versions of these techniques to output multiple dense clusters typically does not work, and is also slow on real-world datasets.

In the last decade, graph neural networks (GNNs) have become a popular approach for learning over graph-structured data using end-to-end differentiable objectives. While these methods have achieved impressive performance on many supervised tasks, their application to unsupervised graph clustering remains less clear. 
Tsitsulin et al.~\cite{TPPM23} demonstrate that classical clustering techniques outperform modern GNN-based methods for graph clustering. Moreover, existing GNN-based community detection approaches often require substantially greater computational resources, may require the number of communities as an input or model parameter~\cite{WLB+20}, and can be sensitive to perturbations of the underlying graph~\cite{Goel2025}. Often, these methods only work for graphs with less than million vertices.
As a result, we do not compare with GNN methods for the following reasons: (i) lack of scalability to large graphs,
(ii) the need for specialized hardware for large graphs, and (iii) the results showing poor clustering performance of GNN methods.

The most relevant to our work is the recent result by Basu et al.~\cite{BaPeQi+24}, that uses local triangle information to design a dense subgraph discovery algorithm with good real-world performance. 
Older theoretical work gives justifications for this approach~\cite{BBS24, GuRoSe14}. Our main insight is that these triangle based techniques can be generalized to detect overlapping cohorts without the computational cost of clique enumeration, while retaining density guarantees. 

\section{The problem setup}\label{sec:setup}

Given an input \emph{undirected and simple} graph $G = (V,E)$, we wish to find (potentially) overlapping dense subgraphs that cover a large fraction of the graph.
We set up a theoretical framework. In this framework, we will define a notion of \emph{triadic cohorts}, that attempts
to capture a "ground truth dense substructure" of interest. From an exploratory data analysis perspective, our aim
would be to efficiently find close enough approximations to each of these cohorts. Consider that data has noise
and no definition of substructure is perfect, we would like to find dense subgraphs that "cover" that cohorts
(and no necessarily output them exactly).

In our setup, the parameters $\eps, \gamma$ should be thought of as constants. 
Our implementation uses a single parameter $\eps$, and we do an experimental analysis
on the effect of this parameter. (More discussion in \Sec{prac}.)
We use the notation $O_{\eps,\gamma}(\cdot), \Omega_{\eps,\gamma}(\cdot)$
to hide dependencies of the form $\poly(\eps,\gamma)$. The edge density of a set $S$ is $E(S)/\binom{|S|}{2}$, where $E(S)$ is the set of edges contained in $S$.

We use $d_u$ to denote the degree of vertex $u$, in the original graph $G$. Our algorithm will continually prune vertices
and construct subgraphs, but $d_u$ always denotes the original degree. This is crucial for our algorithm and analysis
for the discovery of cohorts.

Recall that a triangle is clique of size $3$, and numerous results have established the utility of finding triangle-rich substructures~\cite{SaSePi+15,Ts15,BeGlLe16,TPM17,BaPeQi+24}.
We build on the notion of triangle density for defining cohorts. We capture three well-established properties
of a "good" dense subgraph. (i) All edges participate in sufficiently many triangles. (ii) The diameter of the subgraph is $2$.
(iii) Internally, the subgraph should not have small cuts (alternately, it is a good expander).

\begin{definition} \label{def:cohort} Let $\eps \in (0,1)$. A \emph{$\eps$-triadic cohort} is a connected subgraph $H = (V(H),E(H))$ 
satisfying the following conditions.
\begin{asparaenum}
    \item (Low diameter) The diameter of $H$ is at most $2$. (Equivalently, every pair of vertices in $V(H)$ have a common neighbor through $H$.)
    \item (Triangle rich) Every edge $(u,v) \in E(H)$ is part of at least $\eps \max(d_u, d_v, |V(H)|)$ triangles in $H$.
    \item (Expansion) Every subset $S \subseteq V(H)$ of size at most $|V(H)|/2$ has at least $\eps \sum_{v \in S} d_v$
edges (in $H$) going to $V(H) \setminus S$.
\end{asparaenum}
\end{definition}

Note that triadic cohorts can overlap and potentially even contain each other.
To ensure that triadic cohorts are identifiable, we need to define vertices that are specific to one cohort.

\begin{definition} \label{def:strong}  A vertex $v$
is a \emph{sole} member of an $\eps$-triadic cohort $H = (V(H), E(H))$ if $v \in H$, and for
all edges $(u,v)$, the edge $(u,v)$ participates in at most $(\eps/2)\cdot\max(d_u, d_v)$ triangles
where the third vertex is \emph{not} in $V(H)$.

A triadic cohort is called \emph{identifiable} if it contains a sole vertex.
\end{definition}

Observe that the identifiable condition is quite weak, since we only require
\emph{one} vertex to be a sole member.

Our main theorem follows.

\begin{theorem} \label{thm:main-contain} There is an algorithm \mainproc$(G,\eps,\gamma)$
that outputs a family $\CC$ of sets with the following properties.
\begin{asparaenum}
	\item Every set $C \in \CC$ has edge density at least $\Omega_{\eps,\gamma}(1)$.
	\item For \emph{every} identifiable $\eps$-triadic cohort $H = (V(H), E(H))$,
there exists some $C \in \CC$ containing a $(1-\gamma)$-fraction of $V(H)$. Moreover, $|C| = O_\eps(|V(H)|)$.
\end{asparaenum}
\end{theorem}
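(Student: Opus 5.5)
We will design \mainproc{} as a seed-and-expand procedure: for every vertex $s$, we build a candidate set $C_s$ from $s$'s triangle-rich neighborhood, then prune it; the output is $\CC = \{C_s\}$ after discarding sets that fail a density test. The guarantee for a cohort $H$ comes from choosing $s$ to be a sole vertex of $H$. Concretely, with the \emph{original} degrees $d_u$ fixed, define an edge $(u,v)$ to be \emph{heavy} if it lies in at least $\eps\max(d_u,d_v)$ triangles, and let $G'$ be the subgraph of heavy edges. Set $B_s = \{s\} \cup N_{G'}(s) \cup N_{G'}(N_{G'}(s))$, restricted to vertices $w$ having at least $\Omega(\eps)\, d_w$ heavy-edge triangles inside $B_s$. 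Then run a peeling step: repeatedly delete any vertex $w$ whose number of edges (or triangles) into the current set falls below $\eps^2 d_w/10$. Call the result $C_s$.

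\textbf{Containment.} Fix an identifiable cohort $H$ with sole vertex $s$. Every edge of $H$ is heavy by the triangle-rich condition, and $H$ has diameter $2$, so $V(H) \subseteq B_s$ before the restriction. The sole condition says the triangles on edges at $s$ essentially close inside $V(H)$. The main task is to show that peeling removes at most a $\gamma$-fraction of $V(H)$. The plan is a charging argument with the expansion property. Let $S$ be the set of $H$-vertices peeled, and consider the first moment $|S|$ reaches $\gamma|V(H)|$, assuming $\gamma < 1/2$. By expansion, $S$ sends at least $\eps\sum_{v\in S} d_v$ edges of $H$ to $V(H)\setminus S$, and all of these endpoints are still present. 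On the other hand, each peeled vertex had fewer than $\eps^2 d_w/10$ edges into the current set at its deletion time. Ordering the deletions, each edge from $S$ to $V(H)\setminus S$ was counted at the deletion of its $S$-endpoint, so the total is at most $\eps^2\sum_S d_w/10 < \eps\sum_S d_w$, a contradiction. This shows the peeling never removes even a $\gamma$ fraction; in fact, as long as thresholds are below $\eps$, it removes nothing of $H$ beyond what the initial restriction step drops. The restriction step itself loses at most $\gamma|V(H)|$ vertices by a similar counting using triangle richness, since each $H$-edge has $\eps|V(H)|$ triangles inside $H$.

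\textbf{Size and density.} Every surviving $w\in C_s$ has at least $\eps^2 d_w/10$ neighbors in $C_s$, and these heavy edges give $\Omega(\eps^2 d_w)$ triangles. The key obstacle is bounding $|C_s| = O_\eps(|V(H)|)$: vertices two hops away could be numerous. Here we use solidity of $s$ and degree control. Every $H$-neighbor $u$ of $s$ has $d_u \le |N(s)\cap N(u)|/\eps \le \dots$; more usefully, triangle richness with the $|V(H)|$ term gives $d_u \le$ (triangles)$/\eps$. Combined with the sole condition, this forces $d_s = O(|V(H)|/\eps)$. The heavy neighbors of $s$ outside $H$ are few, because their edge to $s$ would need $\eps\max(d_s,\cdot)$ triangles, of which at most $\eps/2$ can leave $V(H)$. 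Second-hop vertices $w$ are reached through heavy edges from first-hop vertices $u$ of degree $O(|V(H)|/\eps^2)$. I would bound $\sum_u d_u$ over first-hop $u$ by $O_\eps(|V(H)|^2)$. I would then use the requirement that $w$ has $\Omega(\eps d_w)$ triangles in $B_s$ to charge $w$, and total triangle mass to show $|B_s| = O_\eps(|V(H)|)$. If this charging fails, I would first try strengthening the restriction step, e.g. requiring $w$ to have $\Omega(\eps)|V(H)|$ common neighbors with $N(s)$. Once $|C_s| = O_\eps(|V(H)|)$ and each vertex has $\Omega(\eps^2 d_w)\ge\Omega(\eps^3|V(H)|)$ internal neighbors (as $d_w$ for survivors is comparable to $|V(H)|$ via heavy-edge triangle counts), density $\Omega_{\eps,\gamma}(1)$ follows. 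For the output filter we keep only $C_s$ with density at least that constant, which holds for cohort-seeded sets.
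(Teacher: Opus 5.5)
\textbf{There is a genuine gap: the size and density step you flag as ``the key obstacle'' is false for your construction, even when $s$ is sole.} No better charging argument can close it. Let $H$ be a clique on $\{s,u_1,\dots,u_{k-1}\}$, attach to each $u_j$ a private $k$-clique $\{u_j\}\cup W_j$, and add nothing else.
\begin{itemize}
\item Every edge lies in exactly $k-2$ triangles, with $d_s=k-1$, $d_{u_j}=2k-2$, and $d_w=k-1$ for $w\in W_j$.
\item So $H$ is an $\eps$-triadic cohort for $\eps\le 1/4$, and $s$ is a sole member, since all its neighbors lie in $H$.
\item Every edge of $G$ is heavy, so $B_s$ is the entire graph.
\item Every vertex keeps all its triangles and edges inside the current set, so neither the restriction nor the peeling removes anything.
\end{itemize}
Hence $C_s$ has $k+(k-1)^2=\Theta(|V(H)|^2)$ vertices and density $\Theta(1/|V(H)|)$. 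It violates $|C|=O_\eps(|V(H)|)$, your density filter discards it, and then nothing in your argument covers $H$.

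The underlying problem is that thresholds like $\Omega(\eps)d_w$ or $\eps^2 d_w/10$ are local, and a union of many disjoint locally dense pieces passes all of them. Sole-ness constrains only the edges at $s$, not the neighborhoods of $s$'s neighbors, so it gives no control over the second hop. Your containment argument is essentially right; in fact nothing of $H$ is removed. Since you seed at every vertex, it does not even use sole-ness. The paper needs that hypothesis only because its seeds form a maximal independent set (\Clm{goodseed}).

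\textbf{Your fallback would bound the size, but it loses a constant fraction of the cohort.} The fallback keeps only vertices with $\Omega(\eps)|V(H)|$ neighbors in $N(s)$, and no one-shot test around $N(s)$ can retain a cohort. Let $H=G$ consist of three $t$-cliques $A\ni s$, $B$, $D$. Join $A\setminus\{s\}$ completely to $B$, join $B$ completely to $D$, and join one vertex $a_0\in A\setminus\{s\}$ to all of $D$.
\begin{itemize}
\item Every edge lies in at least $t-2$ triangles, and all degrees lie between $t-1$ and $3t$.
\item The diameter is $2$.
\item Every vertex other than $s$ has degree at least $2t-1$, so in any set of size at most $3t/2$ it sends at least $t/2$ edges out.
\item Hence $H$ is an identifiable $\eps$-cohort for small constant $\eps$.
\end{itemize}
Yet each vertex of $D$, a third of $H$, has exactly one neighbor in $N(s)$. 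Vertices of $D$ become visible only after $B$ has been absorbed.

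\textbf{This is why the paper grows rather than prunes.} It starts from $C=\{i\}\cup N(i,G'')$ and repeatedly adds one of two things:
\begin{itemize}
\item a vertex closing $\delta|C|^2$ triangles with both other vertices in $C$;
\item a core of minimum degree $\delta|C|$ in the red-edge graph, where red edges lie in $\delta|C|$ triangles with third vertex in $C$.
\end{itemize}
Because the thresholds scale with $|C|$ and degrees are capped at $d_i/\eps^3$, size and density come for free (\Clm{size}). The real work, in \Lem{contain}, is showing the loop cannot stall while $C$ misses a $\gamma$-fraction of $H$. Expansion gives $\Omega(\gamma\eps^2|S|^2)$ cut edges and hence many cut triangles.
\begin{itemize}
\item If many of these triangles have two vertices in $C$, some outside vertex passes the $\delta|C|^2$ test.
\item If many have only one vertex in $C$, their third edges form a red subgraph of average degree $\Omega_\eps(\gamma d_i)$ on $O_\eps(d_i)$ vertices, so \Lem{kcore} supplies a core to add.
\end{itemize}
That second case, and the core step that handles it, is the idea missing from your proposal.
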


Note the strength of this theorem. There is no assumption on the structure, overlap, or distribution
of the various cohorts. For the sake of discussion, let us assume $\eps$ is a constant. Each
of the output sets have constant edge density, so \mainproc{} generates dense subgraphs.
\emph{Every} identifiable cohort is approximately recovered, since it is almost completely contained
in some output set that is only a constant factor larger. This is an extremely strong guarantee
given the lack of any assumptions on the underlying graph $G$.

\subsection{Preliminaries} \label{sec:prelims}

Before describing the algorithm \mainproc, let us set up preliminary lemmas. First,
we state some basic properties of triadic cohorts. We use the notation $N(v,H)$
to define the set of neighbors of $v$ in the cohort (or subset) $H$.

\begin{lemma} \label{lem:cohort} Consider an \emph{$\eps$-triadic cohort} $H = (V(H),E(H))$.
Let $u, v$ be any two vertices in $V(H)$.
\begin{asparaitem}
	\item $d_v \leq d_u/\eps^2$.
    \item $N(v ,H) \geq \eps |V(H)|$ (hence $|V(H)| \leq d_v/\eps$)
	\item The edge density of $H$ is $\Omega(\eps)$.
\end{asparaitem}
\end{lemma}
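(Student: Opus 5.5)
The plan is to derive all three bullets directly from the triangle-rich condition of \Def{cohort}, using the diameter and connectivity conditions only to make sure every vertex has at least one incident edge in $H$. Throughout, note that the number of triangles in $H$ containing an edge $(u,v)$ is at most the number of common neighbours of $u$ and $v$ inside $H$. That number is at most $\min(|N(u,H)|,|N(v,H)|)$, and both quantities are at most $\min(d_u,d_v)$.

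First, the second bullet. Since $H$ is connected (assume $|V(H)|\ge 2$; the singleton case is trivial), every $v \in V(H)$ has some incident edge $(u,v)\in E(H)$. By the triangle-rich condition this edge lies in at least $\eps\max(d_u,d_v,|V(H)|) \ge \eps|V(H)|$ triangles of $H$. Each such triangle contributes a distinct third vertex in $N(v,H)$, so $|N(v,H)| \ge \eps|V(H)|$. Since $|N(v,H)|\le d_v$, we get $|V(H)|\le d_v/\eps$.

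Second, the degree comparison. Take any edge $(u,w)\in E(H)$. The triangle count on it is at least $\eps\max(d_u,d_w)$ and at most $\min(d_u,d_w)$. Hence $\eps d_w \le d_u$ for adjacent vertices, i.e.\ $d_w\le d_u/\eps$. For arbitrary $u,v\in V(H)$, the diameter condition gives a path $u,w,v$ of length at most $2$ inside $H$. Chaining the adjacent-vertex bound twice gives $d_v \le d_w/\eps \le d_u/\eps^2$; if $u,v$ are adjacent, we even get the better bound $d_u/\eps$.

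Third, the density. By the second bullet every vertex has $H$-degree at least $\eps|V(H)|$, so
\[
|E(H)| \;\ge\; \tfrac12 \eps |V(H)|^2 \;\ge\; \eps\binom{|V(H)|}{2}.
\]
Thus the edge density is at least $\eps=\Omega(\eps)$.

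No step is a real obstacle. The only point needing care is the degree comparison, where the max in the triangle condition is what lets us compare the two endpoint degrees of an edge. The diameter-$2$ condition is exactly what limits the chaining to two steps, which explains the $\eps^{-2}$.
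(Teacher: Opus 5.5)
Your proof is correct and follows the paper's argument essentially step for step. The paper likewise sandwiches the triangle count on an edge $(x,y)$ as $\eps\max(d_x,d_y,|V(H)|)\le(\text{triangles on }(x,y))\le\min(d_x,d_y)$. From this it gets the adjacent-degree comparison, chained twice via the diameter-$2$ condition, and, together with connectivity, the bound $|N(v,H)|\ge\eps|V(H)|$, from which the density follows by summing degrees.

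One small quibble: the singleton case is not ``trivial'' but degenerate. There $N(v,H)=\emptyset$, so the second bullet fails and the density is undefined. It should therefore be explicitly excluded, as the paper's proof also does implicitly.
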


\begin{proof} Consider any edge $(x,y)$ of the cohort. It can participate
in at most $\min(d_x, d_y)$ triangles. By the cohort property, $\min(d_x, d_y) \geq \eps \max(d_x, d_y, |V(H)|)$.
So $d_x \geq \eps d_y$ and $d_y \geq \eps d_x$. Since the diameter is $2$, that implies
for any two vertices $u, v$ in the cohort, $d_v \leq d_u/\eps^2$. 

    Also, the inequality above implies that $d_x \geq \eps |V(H)|$, for any $x$ of non-zero degree in $H$.
Since $H$ is connected, the bound holds for all vertices in $H$. The edge $(x,y)$
participates in at least $\eps \max(d_x,  d_y, |V(H)|)$ triangles in $H$. Hence, $x$
must have at least $\eps \max(d_x, |V(H)|)$ neighbors in $H$. This proves the second point.

The edge density bound follows, since the number of edges in $H$
is at least $(1/2) \sum_{x \in H} N(x,H) \geq (1/2) \eps |V(H)| \sum_{x \in H} 1 \geq (1/2) \eps |V(H)|^2$.

\end{proof}

We will use the classic notion of core decompositions.

\begin{definition} \label{def:kcore} A $k$-core of a graph is a maximal subgraph whose minimum degree is at least $k$.
\end{definition}

Matula-Beck gave a classic linear time algorithm ~\cite{MB83} for computing $k$-cores (for all $k$). The following
lemma is a well known fact about core decompositions. Proof in Appendix~\ref{app:prelims}.

\begin{lemma} \label{lem:kcore} Suppose a graph contains a subgraph with average degree at least $r$.
    Then it contains a $k$-core with $k \geq r/2$.
\end{lemma}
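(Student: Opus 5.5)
The plan is the standard peeling argument. Fix a graph $G$ with a subgraph $H'$ whose average degree is at least $r$, so $|E(H')| \geq r|V(H')|/2$. I will show that repeatedly deleting low-degree vertices from $H'$ cannot empty it, and that the surviving graph is the desired core.

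\textbf{Peeling.} Set $k = \lceil r/2 \rceil$; if $r$ is not an integer, it suffices to work with any $k \geq r/2$ for which the argument goes through, and the peeling below is monotone in the threshold. Starting from $H'$, repeatedly delete any vertex whose current degree is less than $k$, that is, at most $k-1$, until none remains.

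\textbf{Counting.} Each deletion removes strictly fewer than $r/2$ edges, because the deleted vertex has degree less than $k$ and $k-1 < r/2$. Suppose every vertex were eventually deleted. Then the total number of edges removed would be strictly less than $|V(H')| \cdot r/2 \leq |E(H')|$. But deleting all vertices removes all $|E(H')|$ edges, a contradiction. The one subtle point is to use the strict inequality carefully when $r/2$ is itself an integer. In that case a deleted vertex has degree at most $r/2 - 1$, so the total removed is at most $|V(H')|(r/2 - 1) < |E(H')|$, and the contradiction still holds.

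\textbf{Conclusion.} The peeling therefore terminates at a nonempty subgraph in which every vertex has degree at least $k \geq r/2$. This subgraph is contained in some maximal subgraph of $G$ with minimum degree at least $k$, namely a $k$-core. Such a maximal subgraph exists, since the union of two subgraphs with minimum degree at least $k$ again has minimum degree at least $k$. No step here is a real obstacle; the only care needed is the integrality and rounding of $k$ relative to $r/2$.
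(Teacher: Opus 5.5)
Your proof is correct and is essentially the paper's argument. Both peel away low-degree vertices and observe that each deletion removes fewer than $r/2$ edges, so the at least $r|S|/2$ edges of the dense subgraph cannot all disappear. The only difference is that the paper runs Matula--Beck min-degree removal on the whole graph and stops before the first vertex of degree at least $r/2$, whereas you peel inside $H'$ with threshold $\lceil r/2\rceil$, which handles integrality a bit more explicitly.
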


The following is a standard ``reverse" of the classic Markov inequality.
Proof in Appendix~\ref{app:prelims}.

\begin{claim} \label{clm:reverse} [Reverse Markov inequality] Consider a positive
random variable $X$ with maximum value at most $M$. Then $\Pr[X > \EX[X]/2] > \EX[X]/2M$.
\end{claim}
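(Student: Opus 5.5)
We prove the claim by bounding $\EX[X]$ from above, splitting on the event $\{X > \EX[X]/2\}$. Write $\mu = \EX[X]$ and $p = \Pr[X > \mu/2]$. Since $X$ is positive, $\mu > 0$; and since $X \le M$ always, we have $\mu \le M$.

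First I will decompose the expectation along the event and its complement:
\[
\mu = \EX[X \cdot \mathbf{1}_{X \le \mu/2}] + \EX[X \cdot \mathbf{1}_{X > \mu/2}].
\]
On the event $X \le \mu/2$, the integrand is at most $\mu/2$, so the first term is at most $(\mu/2)(1-p)$. On the complementary event, $X \le M$, so the second term is at most $Mp$. Combining the two bounds gives
\[
\mu \le \frac{\mu}{2}(1-p) + Mp,
\]
which rearranges to $\mu/2 \le p\,(M - \mu/2)$.

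The only delicate point is the strictness of the final inequality. I will handle it as follows.

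\begin{itemize}
\item The inequality $\mu/2 \le p\,(M-\mu/2)$ forces $p > 0$, because $\mu > 0$.
\item Since $\mu > 0$, we have $M - \mu/2 < M$. Together with $p > 0$, this gives $\mu/2 \le p(M-\mu/2) < pM$.
\item Dividing by $M$ yields $p > \mu/(2M)$, which is the claimed bound $\Pr[X > \EX[X]/2] > \EX[X]/2M$.
\end{itemize}

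The degenerate case $M - \mu/2 \le 0$ cannot occur. It would require $\mu \ge 2M$, contradicting $\mu \le M$.
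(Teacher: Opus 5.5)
Your proof is correct and takes the same route as the paper: split $\EX[X]$ on the event $\{X > \EX[X]/2\}$, bound the low part by $\EX[X]/2$, and bound the high part by $M\Pr[X > \EX[X]/2]$. Your version is a bit sharper because it keeps the factor $(1-p)$ and uses indicators instead of conditional expectations; that is exactly what gives the strict inequality, whereas the paper bounds $\Pr[X \le \EX[X]/2]$ by $1$, so its ``rearrange'' step literally yields only $\Pr[X > \EX[X]/2] \ge \EX[X]/2M$.
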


\section{The algorithm} \label{sec:algorithm}

\begin{algorithm}[ht]
	\caption{\mainproc$(G,\eps, \gamma)$: (We use $d_v$ to denote the degree of $v$ in the input graph $G$.)}
	\label{alg:main}
	\begin{algorithmic}[1]
		\State Initialize subgraph $G'$ to $G$.
        \While{there is edge $e = (u,v)$ in $< \eps \max(d_u, d_v)$ triangles in $G'$} \label{step:clean}
			\State Delete $e$ from $G'$
		\EndWhile
        \State Delete all isolated (degree zero) vertices from $G'$
		\State Construct a maximal independent set $I$ of $G'$
		\For{each $i \in I$}
			\State Run \contain$(i,G',\eps,\gamma)$, and add the output to $\mathcal C$ \label{step:call}
		\EndFor
		\State Output $\mathcal C$
	\end{algorithmic}
\end{algorithm}

\begin{algorithm}[ht]
	\caption{\contain$(i,G',\eps,\gamma)$: (We use $d_i$ to denote the degree of $i$ in the original $G$.)}
	\label{alg:clean}
	\begin{algorithmic}[1]
        \State Let $G''$ be the subgraph induced by vertices of degree at most $d_i/\eps^3$.
        \State Initialize $C = \{i\} \cup N(i,G'')$. \label{step:nbr}
        \State Set $\delta = \eps^c \gamma$ (for some explicit constant $c$).
        \State flag = \emph{True}
        \While{flag} \label{step:extend}
           \State flag = \emph{False}
           \State{If there is a vertex $v$ in $G''$ that is in $\delta |C|^2$ triangles with both other vertices in $C$,
            add $v$ to $C$ and set flag to be \emph{True}.} \label{step:vertex}
            \For{each edge $(x,y)$ of $G''$ that is in $\delta|C|$ triangles with the third vertex in $C$}
                \State{Color edge $(x,y)$ red} \label{step:edge}
            \EndFor
            \State Compute the core decomposition of the subgraph of red edges. \label{step:core}
            \State If there exists a core with minimum degree at least $\delta |C|$, add all these vertices
            to $C$ and set flag to be \emph{True}. \label{step:core-ext}
		\EndWhile
		\State Output $C$
	\end{algorithmic}
\end{algorithm}

We give a high level exposition of the algorithm. The first step in \mainproc{} is a "cleaning" operation,
wherein edges with few triangles are removed from the graph. The operation has been observed
to be useful for dense subgraph discovery, probably first observed in~\cite{SaPa11}, and
used in many other results~\cite{SaSePi+15,Ts15,BeGlLe16,TPM17,BaPeQi+24}. 

The next step is to choose a set of seeds, from which we extract dense subgraphs. Our idea
is to use a maximal independent set $I$. Intuitively, we want to choose seeds sufficiently
far away from each other, but still give enough coverage of the graph.
The surprising part of the analysis is that such an independent set suffices to capture
\emph{all} triadic cohorts, despite no assumption of the intersection patterns of these cohorts.

The real work happens in \contain$(i,G',\eps,\gamma)$. The obvious first step to initialize the output cluster $C$
is the neighborhood of $i$ in $G''$. (It is easiest to think of $G''$ as $G'$. For technical reasons, $G''$ is defined as
the subgraph on vertices of comparable degree. In practice, this is not needed, but it makes the analysis
easier.) The challenge is to now grow this neighborhood into a dense subgraph that contains triadic cohorts.
The key idea, again common to many previous results, is to absorb as many triangles as possible into $C$.
Roughly speaking, the size of $C$ will be $O(d_i)$. So every vertex added to $C$ must bring in $\Omega(d^2_i)$
triangles, to ensure that triangle density is maintained. That explains \Step{vertex}. 

This leads to the difficult case and one of the main innovations of our theory. There can be many triangles
that have a single vertex in $C$ and two vertices outside. Such triangles cannot be absorbed by \Step{vertex}.
We need to bring in \emph{edges} outside $C$ that participate with many triangles with their third vertex in $C$
(these are denoted red in \Step{edge}). Yet, adding each such edge to $C$ would only bring in $O(d_i)$ triangles,
while adding two vertices. A naive growing process of adding edges would dilute the triangle density.

We observe that the "right" operation is to find vertices that are incident to many red edges. And a subgraph
of such vertices is exactly a $k$-core of red edges. So we compute core decompositions and directly add large
enough $k$-cores to $C$.

This algorithm introduces many analysis challenges. Firstly, we must argue that this iterative
process of growing $C$ converges with $C$ ending as a dense (and triangle dense) structure. Secondly,
we need to prove that triadic cohorts are almost entirely contained in some output dense cluster. This is quite challenging, since we need to argue the neighborhoods explored capture almost the entire cohort.

\subsection{The main analysis} \label{sec:contain}

\begin{claim} \label{clm:cohort-edge} The subgraph $G'$ obtained after the \emph{while} loop (starting in \Step{clean}) of \mainproc$(G,\eps)$
contains all edges of all $\eps$-triadic cohorts. \end{claim}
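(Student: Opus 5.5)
We argue by contradiction, using the deletion order of the cleaning loop in \Step{clean}: the first cohort edge to be deleted would have had to lose triangles earlier, yet every triangle it had inside its cohort is still present at that moment. Let $\mathcal{E}$ be the union of $E(H)$ over all $\eps$-triadic cohorts $H$. The loop deletes edges one at a time, so the deletions are totally ordered. Suppose some edge of $\mathcal{E}$ is deleted, and let $e=(u,v)$ be the \emph{first} such edge, belonging to some cohort $H$.

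At the moment $e$ is deleted, no edge of $\mathcal{E}$ has yet been removed. In particular, all edges of $E(H)$ are still present in the current graph $G'$. By the triangle-rich condition of \Def{cohort}, $e$ lies in at least $\eps\max(d_u,d_v,|V(H)|)$ triangles whose three edges all belong to $E(H)$. (We read ``triangles in $H$'' as triangles all of whose edges lie in $E(H)$.) All of these triangles are still present in $G'$, so $e$ lies in at least
\[
\eps\max(d_u,d_v,|V(H)|) \;\geq\; \eps\max(d_u,d_v)
\]
triangles in $G'$. Recall that $d_u,d_v$ denote the original degrees, which is exactly the quantity the cleaning threshold uses. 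Hence $e$ does not satisfy the deletion condition of the while loop, which contradicts the fact that it was deleted. So no edge of $\mathcal{E}$ is ever deleted, and $G'$ contains all edges of all $\eps$-triadic cohorts.

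There is essentially no technical obstacle. The only points to check are the following.
\begin{itemize}
\item The threshold in the algorithm is stated in terms of original degrees, matching \Def{cohort}. This alignment is deliberate.
\item The ``first deleted edge'' argument handles all cohorts simultaneously, including overlapping ones, because the triangles certifying an edge of $H$ use only edges of $E(H)\subseteq\mathcal{E}$.
\end{itemize}
The subsequent removal of isolated vertices does not delete any edges, so it does not affect the claim.
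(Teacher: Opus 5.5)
Your proposal is correct and follows the paper's proof: take the first deleted cohort edge, note that all of its at least $\eps\max(d_u,d_v)$ certifying triangles in $H$ are still present in $G'$, and conclude it could not have been deleted. Your choice of the first deleted edge over the union of all cohorts, rather than per cohort as in the paper, is a harmless variation, since the certifying triangles use only edges of $E(H)$.
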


\begin{proof} Consider an $\eps$-triadic cohort $H = (V(H),E(H))$. Suppose $G'$
does not contain all edges of $E(H)$. Let $(u,v)$ be the first edge from $H$
deleted in the \emph{while} loop of \Step{clean} on \mainproc. Since $(u,v)$ participates
in at least $\eps \max(d_u, d_v)$ triangles in $H$, just before deletion of $(u,v)$,
it participates in at least the same number of triangles in $G'$. But this means
that it cannot be deleted, a contradiction. Hence, no edges of a cohort can be 
deleted during the construction of $G'$.
\end{proof}

The following claim uses similar arguments, so we give the proof
in Appendix~\ref{app:contain}.

\begin{claim} \label{clm:i-degree} Consider the subgraph $G''$ in \contain$(i,G',\eps,\gamma)$.
The vertex $i$ has degree at least $\eps d_i$ in $G''$. Moreover, each edge incident $i$
participates in at least $\eps d_i$ triangles in $G''$.
\end{claim}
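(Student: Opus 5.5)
The plan is to use only the termination condition of the cleaning loop in \Step{clean}, in the same spirit as \Clm{cohort-edge}. I take $G''$ to be the subgraph of $G'$ induced by the vertices of original degree at most $d_i/\eps^3$. Since $i \in I$ and isolated vertices were deleted before $I$ was built, $i$ has at least one incident edge in $G'$. Trivially $d_i \le d_i/\eps^3$, so $i$ itself lies in $G''$.

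\textbf{Step 1: every $G'$-neighbor of $i$ lies in $G''$.} When the cleaning loop stops, every surviving edge $(x,y)$ of $G'$ is in at least $\eps\max(d_x,d_y)$ triangles of $G'$. A triangle count is at most $d_x$ (original degree, since $G'\subseteq G$), so $d_x \ge \eps \max(d_x,d_y) \ge \eps d_y$. Applying this to an edge $(i,u)$ of $G'$ gives $d_u \le d_i/\eps \le d_i/\eps^3$. Hence every $G'$-neighbor of $i$ survives in $G''$. Since $G''$ is induced in $G'$, all $G'$-edges among $i$ and its $G'$-neighbors are kept. Consequently every $G'$-triangle containing $i$ is present in $G''$: its other two vertices are $G'$-neighbors of $i$, so they lie in $G''$ by Step 1, and all three of its edges are $G'$-edges among $i$ and these neighbors.

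\textbf{Step 2: conclude both statements.} Fix any edge $(i,u)$ of $G'$, which by Step 1 is also an edge of $G''$. By the cleaning invariant it lies in at least $\eps\max(d_i,d_u)\ge \eps d_i$ triangles of $G'$. By the observation at the end of Step 1, all of these triangles lie in $G''$, which gives the second statement. For the degree bound, take any such edge $(i,u)$, which exists since $i$ is not isolated. Each of the at least $\eps d_i$ triangles on $(i,u)$ has a distinct third vertex $w$, and each such $w$ is a neighbor of $i$ in $G''$. So $i$ has degree at least $\eps d_i$ in $G''$.

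\textbf{Main obstacle.} The only delicate point is Step 1: checking that the degree cutoff $d_i/\eps^3$ removes no vertex that appears in a triangle through $i$. This works because adjacency in $G'$ already forces comparable degrees, within a factor of $1/\eps$, which is much looser than the $1/\eps^3$ cutoff. Everything else is bookkeeping with the cleaning invariant.
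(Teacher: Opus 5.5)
Your proof is correct and follows essentially the same route as the paper's. You apply the post-cleaning invariant to an edge $(i,u)$ of $G'$ to get $d_u \le d_i/\eps \le d_i/\eps^3$, so every $G'$-neighbor of $i$, and hence every $G'$-triangle through $i$, survives in $G''$; both the triangle bound and the degree bound then follow from the at least $\eps d_i$ triangles on any edge at $i$, with slightly more explicit bookkeeping than the paper gives.
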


We prove that $C$ cannot be too big (compared to the degree of $i$), and
$C$ must be dense. This is related to proving the termination of the growing process.

\begin{claim} \label{clm:size} Let $C$ denote the output of \contain$(i,G',\eps,\gamma)$.
    Then, $|C| = O_{\eps,\gamma}(d_i)$ and the density of $C$ is $\Omega_{\eps,\gamma}(1)$.
\end{claim}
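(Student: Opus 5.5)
The plan is a potential-function argument: track the number of triangles of $G''$ with all three vertices in $C$, which we denote $T(C)$, together with $|C|$. Write $d = d_i$. Every vertex of $G''$ has original degree at most $d/\eps^3$, so each vertex lies in at most $(d/\eps^3)^2$ triangles of $G''$. This gives the a priori bound $T(C) \le |C|\,d^2/\eps^6$. The key invariant I would aim for is
\[
T(C) \ge \eta\, |C|^3
\]
for some $\eta = \poly(\eps,\gamma)$, maintained through every iteration of the \emph{while} loop. Once we have it, the two bounds combine to give $\eta |C|^3 \le |C| d^2/\eps^6$, hence $|C| = O_{\eps,\gamma}(d)$. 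For density, note that a set with $E(C)$ edges contains at most $O(E(C)^{3/2})$ triangles. So $T(C) \ge \eta|C|^3$ forces $E(C) = \Omega(\eta^{2/3}|C|^2)$, i.e.\ density $\Omega_{\eps,\gamma}(1)$. (The same size bound also guarantees termination, since $C$ only grows.)

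For the base case, \Clm{i-degree} says $i$ has at least $\eps d$ neighbors in $G''$, and each edge at $i$ lies in at least $\eps d$ triangles of $G''$. The third vertex of each such triangle is also a neighbor of $i$, hence lies in $C$. So the initial $C$ has $|C| \le 1 + d/\eps^3$ and $T(C) \ge \eps d \cdot \eps d/3$, which is $\Omega(\poly(\eps)|C|^3)$ because $|C| = O(d/\eps^3)$. Since $|C|$ stays $\Theta_{\eps,\gamma}(d)$ as long as the invariant holds, I may state the invariant equivalently as $T(C) \ge \eta' d^2 |C|$.

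\textbf{Vertex step (\Step{vertex}).} Adding $v$ increases $|C|$ by $1$ and $T$ by at least $\delta|C|^2$. The invariant in the form $T \ge \eta|C|^3$ needs an increase of about $3\eta|C|^2$, so it is preserved once $\delta \ge 3\eta$.

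\textbf{Core step (\Step{core-ext}).} This is the main obstacle. Let $K$ be the added core, with minimum red degree at least $\delta|C|$. The red edges inside $K$ number at least $\delta|C||K|/2$, and each lies in at least $\delta|C|$ triangles whose third vertex is in $C$. Each triangle of $C \cup K$ is counted at most three times, so the gain in $T$ is at least roughly $\delta^2|C|^2|K|/6$, as long as the red edges are mostly not already inside $C$. The difficulty is that we must compare this gain against $(|C|+|K|)^3$, and $|K|$ might a priori be huge compared to $|C|$.

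I would first try to bound $|K|$ directly. Each vertex $x \in K$ has at least $\delta|C|$ red edges, and each red edge $(x,y)$ has at least $\delta|C|$ common neighbors in $C$. Hence $x$ has $\Omega(\delta|C|)$ neighbors in $C$, and so $K$ sends $\Omega(\delta|C||K|)$ edges into $C$. On the other side, vertices of $C$ have degree at most $d/\eps^3 = O(|C|/\poly)$, so at most $|C|\cdot O(|C|/\poly)$ edges leave $C$. Together these give $|K| = O_{\eps,\gamma}(|C|)$.

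With $|K| \le \beta|C|$, the required gain is $\eta((1+\beta)^3 - 1)|C|^3 = O(\eta\beta)|C|^2|K|\cdot(|C|/|K|)\ldots$; more carefully, I would use the $d^2|C|$ form of the invariant, where the needed increase is $\eta' d^2|K|$. The gain is $\Omega(\delta^2 d^2|K|)$ since $|C| \ge \eps d$, so it suffices to choose $\eta' \ll \delta^2$. The bound $|C| = \Theta(d)$ used here is itself justified inductively from the invariant at the previous step. Vertices of $K$ already in $C$ contribute no size increase, so they cause no problem.

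Choosing $\eta$ as a small polynomial in $\delta$ and $\eps$ closes the induction. This yields both conclusions of \Clm{size}.
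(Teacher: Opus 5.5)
There is a genuine gap: the base case of your invariant $T(C)\ge\eta|C|^3$ is false, so the induction never starts. Initially you only know $T(C_0)\ge\eps^2d^2/2$ (triangles through $i$). But $|C_0|\ge\eps d$ by \Clm{i-degree}, so $\eta|C_0|^3\ge\eta\eps^3d^3$. Your bound $|C_0|=O(d/\eps^3)$ makes $|C_0|^3$ of order $d^3$, not $d^2$, so you are off by a factor of $d$.

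This is not just a lossy estimate. Take $K_{m,m,m}$ on parts $A,B,X$ and join $i$ to $A\cup B$. For $\eps\le 1/3$ no edge is removed by cleaning, and $C_0=\{i\}\cup A\cup B$ has $2m+1$ vertices but only $m^2$ triangles.

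The reformulation $T(C)\ge\eta' d^2|C|$ fails the same way. It also can never give a size bound: set against $T(C)\le|C|d^2/\eps^6$, the factor $|C|$ cancels. Moreover, passing between the two forms uses $|C|=\Theta(d)$, which is what you are trying to prove. So neither $|C|=O_{\eps,\gamma}(d)$ nor the density bound (which you derive from $T(C)\ge\eta|C|^3$ via $T=O(E(C)^{3/2})$) is established.

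The size bound needs no potential, because each admission rule caps $|C|$ at the moment it fires. A vertex step requires $\delta|C|^2\le d_v^2\le d^2/\eps^6$. A core step requires some red edge $(x,y)$, whence $\delta|C|\le d_x\le d/\eps^3$. Your own (correct) estimate then bounds the added core by $|K|\le d/(\delta\eps^3)$: every vertex of $K$ has at least $\delta|C|$ neighbors in $C$, while the vertices of $C$ have total degree at most $|C|d/\eps^3$. So $|C|=O_{\eps,\gamma}(d)$ after every step.

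Alternatively, your step computations, done in the $|C|^3$ form using only $|C|\ge\eps d$, show that $T(C)-\eta|C|^3$ never decreases. This gives $T(C)\ge\eta(|C|^3-|C_0|^3)$, which together with $T(C)\le|C|d^2/\eps^6$ also bounds $|C|$. However, it says nothing about triangle density when $|C|$ is close to $|C_0|$.

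For density, do not go through triangles. We have $C\supseteq N(i,G'')$, and the at least $\eps d$ edges at $i$ each lie in at least $\eps d$ triangles of $G''$. This yields at least $\eps^2d^2/2$ distinct edges inside $N(i,G'')\subseteq C$. With $|C|=O_{\eps,\gamma}(d)$, that is density $\Omega_{\eps,\gamma}(1)$. This is essentially the paper's proof.
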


\begin{proof} We first bound the size of $C$, just at the time of termination.
Suppose the last vertex is added (to $C$) through \Step{extend} of \contain. 
Suppose a vertex $v$ was added because it is in $\delta |C|^2$ triangles with
both other vertices in $C$. The vertex $v$ can participate in at most $d^2_v \leq d^2_i/\eps^6$
triangles. Hence, $|C| \leq d_i/(\sqrt{\delta} \eps^3)$ and the final output has size
$|C| + 1$. That completes the proof in this case.

Suppose the last addition to $C$ was a core. Every red edge $(x,y)$ participates in at least $\delta |C|$
triangles, so $\delta |C| \leq d_x \leq d_i/\eps^3$. The set $C$ participates
in at most $|C| d^2_i/\eps^6$ triangles. Since distinct red edges
participate in different triangles with a third vertex in $C$, there are at most $d^2_i/(\delta \eps^6)$
red edges. The core has a minimum (red) degree of $\delta |C|$. If the core has $k$
vertices, it involves at least $\delta k |C|/2$ edges. Hence,
$k \leq 2d_i^2/(\delta^2\eps^6|C|)$.
Since $C$ contains at least the neighborhood of $i$ in $G''$,
$|C| \geq \eps d_i$ (by \Clm{i-degree}). Hence, $k = O_{\eps,\gamma}(d_i)$.
Note that the minimum degree of the core is at least $\delta |C|$ and is at most $k$.
Hence $\delta |C| \leq k = O_{\eps,\gamma}(d_i)$. So the size of $C$,
even after the addition of the core, is $O_{\eps,\gamma}(d_i)$.

So we have proven that $|C| = O_{\eps,\gamma}(d_i)$. Now, we lower bound the edge density. 
By \Clm{i-degree}, every edge incident to $i$ in $G''$
participates in at least $\eps d_i$ triangles in $G''$. In total, $i$ participates
in at least $\eps^2 d^2_i/2$ triangles in $G''$ (triangles might be counted twice). Each such triangle contributes a unique edge
in the neighborhood of $i$. Since all vertices in the neighborhood are part of $C$,
$C$ contains at least $\eps^2 d^2_i$ edges. $|C| = O_{\eps,\gamma}(d_i)$,
so the edge density is $\Omega_{\eps,\gamma}(1)$.
\end{proof}

We pick a specific identifiable $\eps$-triadic cohort $H = (V(H), E(H))$ and let $v$
denote a sole member in $H$. 

\begin{claim} There exists some $i \in I$
    such that at least $(\eps^2/2) |V(H)|$
vertices of $H$ are contained in $C$ as initialized in \Step{nbr}
of the call to \contain$(i,G',\eps,\gamma)$.  \label{clm:goodseed}
\end{claim}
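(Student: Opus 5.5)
We take $v$ to be the sole member of $H$ and split into two cases, according to whether $v \in I$ or not. In either case we produce a seed $i \in I$ with $i \in V(H)$, and then show that $N(i,G'')$ captures many vertices of $H$. By \Clm{cohort-edge}, every edge of $H$ survives in $G'$. So $v$ has degree at least $\eps|V(H)|$ in $G'$ by \Lem{cohort}, and in particular $v$ is not deleted as isolated.

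\emph{Choosing the seed.} If $v \in I$, take $i = v$. Otherwise, by maximality of $I$, $v$ has a neighbor $i \in I$ in $G'$. Here the sole-member property is used to show that $i$ must lie in $V(H)$. The edge $(i,v)$ survived cleaning, so it lies in at least $\eps\max(d_i,d_v)$ triangles in $G' \subseteq G$. Since $v$ is sole, at most $(\eps/2)\max(d_i,d_v)$ of these triangles have their third vertex outside $V(H)$. Hence at least $(\eps/2)\max(d_i,d_v)$ triangles on $(i,v)$ have their third vertex $w \in V(H)$, and each such $w$ is a common neighbor of $i$ and $v$. This alone does not place $i$ in $V(H)$, and that is the delicate point. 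The fallback is not to require $i \in V(H)$ at all. Instead, we count the set $W$ of such third vertices $w \in V(H) \cap N(i,G')$, which satisfies $|W| \geq (\eps/2)\max(d_i,d_v)$. By \Lem{cohort}, $d_v \geq \eps|V(H)|$, so $|W| \geq (\eps^2/2)|V(H)|$.

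\emph{Surviving the degree filter in $G''$.} We need $W \subseteq N(i,G'')$, which requires every $w \in W$ to have $d_w \leq d_i/\eps^3$. For $w \in V(H)$, \Lem{cohort} gives $d_w \leq d_v/\eps^2$, so it suffices to show $d_v \leq \eps d_i$, up to the available slack. Since $(i,v)$ lies in at least $\eps\max(d_i,d_v)$ triangles and at most $\min(d_i,d_v)$, we get $d_v \leq d_i/\eps$. Combining, $d_w \leq d_i/\eps^3$, exactly the threshold, and $i$ itself trivially passes. In the case $i = v$, the same bound $d_w \leq d_v/\eps^2 \leq d_v/\eps^3$ holds, and the edges of $H$ at $v$ are all in $G'$. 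So $N(v,G'') \supseteq N(v,H)$, which has size at least $\eps|V(H)| \geq (\eps^2/2)|V(H)|$.

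\emph{Main obstacle.} The main obstacle is the non-seed case: confirming that the common neighbors $w$ are adjacent to $i$ in $G'$ (not merely in $G$), and that their degrees pass the filter. Adjacency in $G'$ follows because $(v,w) \in E(H)$, but $(i,w)$ might have been deleted during cleaning. If this happens, we would instead count triangles on $(i,v)$ present in $G'$, using the strict inequality in the cleaning condition, so that the surviving edge $(i,v)$ has at least $\eps\max(d_i,d_v)$ triangles in $G'$ itself. Subtracting the at most $(\eps/2)\max(d_i,d_v)$ outside triangles then leaves $W$ inside $N(i,G')$, as required.
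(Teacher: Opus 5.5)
Your proof is correct and follows the paper's argument. The maximal independent set splits the proof into $i=v$, where $N(v,H)\subseteq N(v,G'')$ by \Clm{cohort-edge} and \Lem{cohort}, and $i\neq v$, where you subtract the at most $(\eps/2)\max(d_i,d_v)$ outside triangles from the at least $\eps\max(d_i,d_v)$ triangles on $(i,v)$ in $G'$ and use $d_v\geq\eps|V(H)|$; you also check that $v$ survives in $G'$, which the paper leaves implicit. The one deviation is the degree filter: the paper gets $d_w\leq d_i/\eps$ from the cleaning condition on the surviving edge $(i,w)$, while you chain $d_w\leq d_v/\eps^2$ from \Lem{cohort} with $d_v\leq d_i/\eps$ to land exactly on the $d_i/\eps^3$ threshold; both work, though the condition you need is $d_v\leq d_i/\eps$ (not $d_v\leq\eps d_i$ as written), and your opening promise that $i\in V(H)$ should be dropped.
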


\begin{proof} Since $I$ is a maximal independent set, either $v \in I$
or some neighbor of $v$ (in $G'$) is in $I$. Let $i$
denote that vertex. 

{\em Case 1, $i = v$:} By \Clm{cohort-edge}, all edges of $H$ are in $G'$.
Moreover, by \Lem{cohort}, all neighbors of $i$ in $H$ have degree at most $d_i/\eps^2$.
    Hence, $N(i, G'')$ contains $N(i,H)$. \Lem{cohort} asserts that $N(i, H) \geq \eps |V(H)|$,
completing the proof in this case.

{\em Case 2, $i \neq v$:} Since $(i,v)$ is an edge in $G'$, it participates in at least
$\eps\max(d_i,d_v)$ triangles in $G'$. Since $v$ is a sole member, at least $(\eps/2)\max(d_i,d_v)$ of these triangles have their third vertex in $V(H)$. Every such third vertex $w$ belongs to $G''$: since $(i,w)\in E(G')$, the cleaning condition implies $d_w\leq d_i/\eps\leq d_i/\eps^3$. Finally, \Lem{cohort} gives $d_v\geq\eps|V(H)|$. Therefore, $N(i,G'')$ contains at least $(\eps^2/2)|V(H)|$ vertices of $H$.

\end{proof}

The main technical work is done in the next lemma asserting that, for every cohort $H = (V(H),E(H))$, some output set
contains almost all of $H$. We give a high level explanation of the proof. By the previous claim,
there is some starting vertex $i$ such that $C$ contains at least $\eps^2 |V(H)|/2$ vertices of $H$.
The set $C$ will continually grow until the \emph{while} loop in the call \contain$(i,G',\eps,\gamma)$ terminates. Suppose $C$
has less than a $(1-\gamma)$-fraction of $V(H)$. Using the cohort definitions and \Lem{cohort},
we show there are many triangles of $H$ that are cut by $C$. These triangles either
have two vertices in $C$, or exactly one vertex in $C$. In the former case,
by a careful accounting, we prove that there is some vertex \emph{outside} $C$
that has many triangles to $C$, and \Step{vertex} of \contain\ will add it to $C$.
In the latter case, we prove that there must be a dense subgraph of red edges,
so \Step{core-ext} will add to $C$. Overall, the \emph{while} loop does not terminate
as long as $C$ has less than a $(1-\gamma)$-fraction of $V(H)$.

\begin{lemma} \label{lem:contain} For a cohort $H = (V(H), E(H))$, let $i$ be the vertex
obtained in \Clm{goodseed}. The output set of \contain$(i,G',\eps,\gamma)$
    contains a $(1-\gamma)$-fraction of $V(H)$.
\end{lemma}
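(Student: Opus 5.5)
We argue by contradiction on the terminal state of the \emph{while} loop in \contain$(i,G',\eps,\gamma)$. Let $C$ be the final set, and let $A = V(H)\cap C$ and $B = V(H)\setminus C$. Suppose $|B| > \gamma |V(H)|$. By \Clm{goodseed}, $|A| \geq (\eps^2/2)|V(H)|$, so both sides are nontrivial.

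First we check that all of $H$ lives in $G''$. Every $w\in V(H)$ satisfies $d_w \le d_v/\eps^2$ by \Lem{cohort}. The seed $i$ is either $v$ itself or shares $\Omega(\eps|V(H)|)$ common neighbours with $v$, and those are constrained by the cleaning step, so $d_w \le d_i/\eps^3$. Together with \Clm{cohort-edge}, this puts every edge of $H$ in $G''$. This degree comparability also gives $|C| = O_{\eps,\gamma}(d_i) = O_{\eps,\gamma}(|V(H)|)$ by \Clm{size} and \Lem{cohort}, and it makes all $d_w$ with $w\in V(H)$ equal to $d_i$ up to $\poly(\eps)$ factors.

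Next we lower-bound the number of triangles of $H$ cut by $(A,B)$. Apply the expansion property to the smaller of $A$ and $B$. That side has size at least $\min(\eps^2/2,\gamma)|V(H)|$, and each of its vertices has $d_w \ge \eps|V(H)|$. So there are at least $\eps \min(\eps^2/2,\gamma)\eps |V(H)|^2$ edges of $H$ between $A$ and $B$. Each such edge lies in at least $\eps|V(H)|$ triangles of $H$, and each triangle uses at most two cut edges. Hence there are $T = \Omega(\poly(\eps)\gamma|V(H)|^3)$ triangles of $H$ that meet both $A$ and $B$. Each such triangle has either two vertices in $A$ and one in $B$ (type 1) or one in $A$ and two in $B$ (type 2). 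At least $T/2$ are of one type.

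\emph{Type 1 case.} Sum over $u\in B$ the number of type-1 triangles through $u$ whose other two vertices lie in $A\subseteq C$. This sum is at least $T/2$ and $|B|\le|V(H)|$, so some $u\in B$ lies in $\Omega(\poly(\eps)\gamma|V(H)|^2)$ triangles with both other vertices in $C$. Since $|C| = O(\poly(1/\eps,1/\gamma)|V(H)|)$, choosing the constant $c$ in $\delta=\eps^c\gamma$ large makes this at least $\delta|C|^2$. Then \Step{vertex} would fire, a contradiction. (If the $\gamma$ dependence of $|C|$ is problematic, the size bound should be used in the form $|C| = O(d_i/\poly(\eps)\sqrt\delta)$ and balanced against the choice of $\delta$. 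This bookkeeping is a point to check carefully.)

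\emph{Type 2 case.} This is the main obstacle. Each type-2 triangle is an edge $(x,y)$ inside $B$ together with a third vertex in $A\subseteq C$. Call an edge of $H[B]$ heavy if it lies in at least $\delta|C|$ such triangles. Each edge is in at most $|V(H)|$ of them, and there are at most $|V(H)|^2$ edges, so light edges account for at most $\delta|C||V(H)|^2 \ll T/2$ triangles. Therefore heavy edges carry $\Omega(T)$ triangles. Each heavy edge carries at most $|V(H)|$, so there are $\Omega(\poly(\eps)\gamma|V(H)|^2)$ heavy edges, and all of them are colored red in \Step{edge}. The red subgraph on $B$ then has average degree $\Omega(\poly(\eps)\gamma|V(H)|)$. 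By \Lem{kcore}, the red graph has a core of minimum degree at least half this value, which exceeds $\delta|C|$ for large $c$. So \Step{core-ext} would add vertices.

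Two details need care here. First, the core found might already lie inside $C$. To rule this out, we would apply the argument to red edges outside $C$, or note that the algorithm should only count a core as progress if it contains new vertices. In the analysis we restrict to red edges within $B$, which is disjoint from $C$, and \Lem{kcore} applied to the red graph restricted to $B$ gives a core inside $B$. Second, we must confirm that \Step{core-ext} considers such a core, which holds if "there exists a core" is read as any sufficiently deep core. In either case the loop does not terminate while $|B|>\gamma|V(H)|$, which contradicts termination, since termination is guaranteed by \Clm{size}.
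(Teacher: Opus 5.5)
Your argument follows the paper's route almost step for step: contradiction at termination, expansion giving many cut edges and hence $\Omega(\poly(\eps)\gamma|V(H)|^3)$ cut triangles, then \Step{vertex} for triangles with two vertices in $C$ and a red core via \Lem{kcore} for triangles with one vertex in $C$. Your averaging over $u\in B$ and your heavy/light split are fine replacements for the paper's square-root estimate and its reverse Markov step.

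The genuine gap is the step you flag, and it is more than bookkeeping. At termination, both cases need $\delta|C|^2 \le \poly(\eps)\gamma|V(H)|^2$ (type 1) and $\delta|C| \le \poly(\eps)\gamma|V(H)|$ (type 2, for both the light-edge estimate and the core degree). You obtain these from ``$|C| = O(\poly(1/\eps,1/\gamma)|V(H)|)$, then take $c$ large''. But the constant in \Clm{size} depends on $\delta$, and hence on $c$. The bound there is $|C| \le d_i/(\sqrt{\delta}\,\eps^3)+1$ when the last addition is a vertex, and only polynomial in $1/\delta$ after a core addition. At $|C| \approx d_i/(\sqrt{\delta}\,\eps^3)$ we get $\delta|C|^2 \approx d_i^2/\eps^6$ for every $c$. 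Meanwhile the vertex $u$ you produce is only guaranteed about $\eps^3\gamma|V(H)|^2/4 \le \gamma d_i^2/(4\eps)$ triangles into $C$. Because $\delta$ cancels, the balancing proposed in your parenthetical cannot work. If $|C|$ is of order $d_i/\delta$, then $\delta|C| = \Omega(d_i)$, and the type-2 case fails as well.

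This is not an artifact of a loose estimate, because \contain\ does nothing to keep $|C| = O_\eps(d_i)$ while part of $H$ is still outside. Take a long chain of cliques of size $\Theta(d_i)$, consecutive ones sharing $\Theta(d_i)$ vertices, hanging off $N(i,G'')$. \Step{core-ext} swallows it one clique per iteration until $\delta|C|$ exceeds the overlap, i.e.\ until $|C| = \Theta(d_i/\delta)$. In the same time \Step{vertex} adds only one vertex per iteration. So no $\delta$-independent bound on the terminal $|C|$ holds in general.

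The contradiction at termination therefore needs an ingredient you do not supply. One option would be an argument that $H$ is absorbed before $C$ outgrows $O(\poly(1/\eps)\,d_i)$; this is delicate, since the order of additions is arbitrary. Given such a $\delta$-independent bound, your choice of large $c$ would indeed close both cases.

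The paper's own proof takes the same shortcut: it invokes \Clm{size} as $|C| = O_\eps(d_i)$ in both cases. So matching the paper here does not close the gap.
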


\begin{proof} For convenience, we use $S$ to denote $V(H)$.
    
    By \Clm{goodseed}, the output set $C$ contains at 
least $(\eps^2/2) |S|$ vertices of $S$. Suppose, for contradiction's sake,
$C$ contained less than $(1-\gamma)|S|$ vertices of $S$.
Then, we will prove that the while loop of \Step{extend} of \contain\ would not terminate; i.e. it continues till more vertices from $S$ are included in the output.

We will assume that $\gamma$ is sufficiently small, compared to $\eps$.
Observe that $S \setminus C$ has size in the range $(\gamma |S|, (1-\eps^2/2)|S|)$.
Hence, $\min(|S \cap C|, |S \setminus C|) \geq \gamma |S|$.
Note that either $|S \cap C|$ or $|S \setminus C|$ is at most $|S|/2$.
By \Lem{cohort}, each vertex in $S$ has degree at least $\eps |S|$.
By the expansion property of cohort (\Def{cohort}), there are at least $\eps (\gamma |S|)(\eps |S|)
= \gamma \eps^2 |S|^2$ edges 
edges of $H$ leaving $C$ (call these cut edges). All edges of $H$ are present in $G'$. Moreover, all
vertices in $H$ have degree at most $d_i/\eps^3$, since $i$ is a neighbor
of some vertex in $H$. Each cut edge participates in at least $\eps |S|$ triangles
in $H$, called cut triangles. Each such triangle involves exactly two such cut edges. So if we count
the number of triangles incident to each cut edge, each cut triangle is counted twice.

Overall, the graph $G''$ contains at least $\gamma \eps^2 |S|^3/2$ cut triangles of $H$, that contain
at least one vertex in $C$. The remaining vertices of these triangles lie in $S$.
We split in two cases: either a majority of these cut triangles have one vertex in $C$,
or they have two.
 
{\em Case 1, there are at least $\gamma \eps^2 |S|^3/4$ cut triangles with exactly two vertices in $C$.}
For every vertex $u$, let $t_u$ be the number of triangles it forms with two vertices in $C$. 
Since the procedure has terminated, $\forall u, t_u \leq \delta |C|^2$. On the other hand,
this case asserts that $\sum_u t_u \geq \gamma \eps^2 |S|^3/4$. We will prove a contradiction.
Firstly, $\forall u, \sqrt{t_u} \geq \sqrt{\delta} |C| / 2$. Secondly, $\sqrt{t_u} \leq b_u$,
where $b_u$ is the number of neighbors that $u$ has in $C$.

\begin{equation}
\gamma \eps^2|S|^3/4 \leq \sum_u t_u = \sum_u \sqrt{t_u} \sqrt{t_u} \leq \sum_u \sqrt{\delta} |C| b_u
= \sqrt{\delta} |C| \sum_u b_u
\end{equation} 

By \Clm{size}, $|C| = O_\eps(d_i)$. By the assumption of $i$ (from \Clm{goodseed}), it is
a neighbor of some vertex in $S$. Hence, $d_i = O_\eps(|S|)$.
$\sum_u b_u = \Omega_\eps(\delta |S|^2/\sqrt{\gamma})$. Note that $\sum_u b_u$
is exactly the sum of degrees of vertices in $C$. Each vertex in $C$ has degree at most $d_i/\eps^3$,
and $|C| = O_\eps(d_i)$. Moreover, $|S| = \Theta_\eps(d_i)$, so $\sum_u b_u = O_\eps(|S|^2)$.
Hence, for sufficiently small $\delta$, we get a contradiction.

{\em Case 2, there are at least $\gamma \eps^2 |S|^3/4$ cut triangles with exactly one vertex in $C$.}
Let $F$ be the set of third (non-cut) edges of these triangles. These are edges of $H$ with both
endpoints in $S \setminus C$. For each $e \in F$, let $X_e$ be the number of these triangles whose
non-cut edge is $e$, and let $T = \sum_{e \in F} X_e$. Thus,
$T \geq \gamma \eps^2 |S|^3/4$. Since every vertex of $G''$ has degree at most $d_i/\eps^3$,
$|F| \leq |S|d_i/(2\eps^3)$, so
\[
\mu := \frac{T}{|F|} \geq \frac{\gamma \eps^5 |S|^2}{2d_i}
= \Omega_\eps(\gamma d_i),
\]
where we use $|S| = \Theta_\eps(d_i)$. Moreover, $X_e \leq M := d_i/\eps^3$ for every $e \in F$.

Apply the reverse Markov inequality of \Clm{reverse} to an edge chosen uniformly from $F$. If $q$
is the number of edges with $X_e > \mu/2$, then
\[
q \geq |F|\frac{\mu}{2M} = \frac{T}{2M}
\geq \frac{\gamma \eps^5 |S|^3}{8d_i}
= \Omega_\eps(\gamma d_i^2).
\]
Each such edge participates in more than $\mu/2 = \Omega_\eps(\gamma d_i)$ triangles with a third
vertex in $C$. Since $|C| = O_\eps(d_i)$ and $\delta = \eps^c\gamma$ is sufficiently small, these
edges are red. Their endpoints lie in $S$, which has $O_\eps(d_i)$ vertices, so the red subgraph
they induce has average degree $\Omega_\eps(\gamma d_i)$. By \Lem{kcore}, it contains a core of
minimum degree $\Omega_\eps(\gamma d_i) \geq \delta |C|$. Thus, more vertices are added to $C$.
The loop does not terminate, completing the contradiction. 
\end{proof}

\subsection{Wrapping up} \label{sec:wrapup}

The proof of \Thm{main-contain} follows directly.

\begin{proof} Consider the family $\CC$ output by \mainproc$(G,\eps,\gamma)$. By \Clm{size},
the edge density of any $C \in \CC$ is $\Omega_{\eps,\gamma}(1)$. Consider any identifiable
$\eps$-triadic cohort $H$. By \Clm{goodseed} and \Lem{contain}, some output set of $\CC$
contains a $(1-\gamma)$-fraction of $V(H)$.
\end{proof}

We now look at the running time bound for \mainproc.  We use $t$ to denote the number of triangles in $G$,
$t_i$ to denote the triangle count incident to vertex $i$, and $R$ to be 
the running time of a triangle enumeration procedure. The analysis is pessimistic and just
for theoretical purposes, and the full proof is delegated to Appendix~\ref{app:runtime-thm}.

\begin{theorem} \label{thm:runtime} There is an implementation of \mainproc$(G,\eps,\gamma)$
that runs in time $O_{\eps,\gamma}((m + n + R)\log n + t \max_i \sqrt{t_i})$.
The storage is $O(m+n+t)$.
\end{theorem}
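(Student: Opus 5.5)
We bound the running time phase by phase; throughout, all parameters depending only on $\eps,\gamma$ are absorbed into constants. \textbf{Preprocessing.} We first enumerate all triangles of $G$ in time $R$ and store, for each edge, the list of triangles containing it (storage $O(m+n+t)$). The cleaning loop of \Step{clean} is a peeling process: each edge keeps a live triangle count, and a queue holds edges whose count is below $\eps\max(d_u,d_v)$. Deleting an edge decrements the counts of the two other edges in each of its live triangles, so each triangle is processed $O(1)$ times. This gives $O(m+t)$ total, or $O((m+t)\log n)$ with a priority structure; since $t=O(R)$ in any case, this fits the bound. Computing a maximal independent set greedily and the induced subgraphs $G''$ (one degree filter per seed) is linear per seed. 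Summed over seeds, we will charge this against the work below.

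\textbf{The growth loop of \contain.} Fix a seed $i$. The plan is to maintain counters incrementally as $C$ grows. For every vertex $u$ we keep $t_u=$ number of triangles with the other two vertices in $C$. For every edge $(x,y)$ we keep the number of triangles with third vertex in $C$. When a vertex $w$ enters $C$, we scan all triangles containing $w$ (these are $t_w$ triangles) and update both kinds of counters. Each triangle update is $O(1)$. Threshold checks for \Step{vertex} and \Step{edge} use a max-heap keyed by $t_u$ and a set of newly red edges, which costs $\log n$ per update. The thresholds $\delta|C|^2$ and $\delta|C|$ only increase, so we recheck candidates lazily. Core decompositions in \Step{core} are done with Matula--Beck on the red subgraph. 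By \Clm{size}, the red subgraph has $O_{\eps,\gamma}(d_i^2)$ edges and $C$ has size $O_{\eps,\gamma}(d_i)$. To avoid recomputing cores after every single vertex addition, we batch: a core is recomputed only when $|C|$ has grown by a constant factor, or when no vertex step applies. This gives $O(\log d_i)$ core computations per seed, each in time linear in the red edges.

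\textbf{Summing over seeds (main obstacle).} A single triangle may be scanned by many seeds, since the sets $C$ overlap. The key step is to bound, for each triangle $\tau$, the number of seeds $i$ whose $C$ contains a vertex of $\tau$. The approach I would try first uses three facts. Every $C$ built from seed $i$ sits inside $G''$, so it has degrees comparable to $d_i$. Also $|C|=O(d_i)$. And the seeds form an independent set in $G'$, each seed having $\Omega(d_i^2)$ incident triangles in $G''$ by \Clm{i-degree}. Then, for a vertex $w$ with $t_w$ triangles, the seeds whose $C$ contains $w$ have $d_i=\Theta(d_w)$. One argues that there are at most $O(\sqrt{t_w})$ such seeds, roughly because $d_i = O(\sqrt{t_w})$ after accounting for the $\eps$-powers. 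The cost of scanning $w$'s triangles across all seeds is then $t_w\cdot O(\sqrt{t_w})$. Summing over $w$ gives $O(t\max_i\sqrt{t_i})$. The core computations per seed cost $O(d_i^2\log n)=O(t_i\log n)$, which sums to $O(t\log n)$.

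I expect this multiplicity bound to be the delicate step. If the charging fails, a fallback is the crude bound: each seed scans the triangles in $G''$ restricted to $C$, of which there are at most $|C|\cdot d_i^2$. Spread over the $\Omega(d_i^2)$ triangles of the seed, this still yields the $t\max_i\sqrt{t_i}$ form.
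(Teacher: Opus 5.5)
Your proposal contains a correct proof, but only as the one-line fallback at the end. The route you present as primary has a genuine gap at exactly the step you flag.

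\textbf{The gap.} The justification ``at most $O(\sqrt{t_w})$ such seeds, roughly because $d_i = O(\sqrt{t_w})$'' is a non sequitur. Knowing $d_i = O(\sqrt{t_w})$ for every seed $i$ whose set $C$ contains $w$ bounds the degree of each such seed, not the number of such seeds. Seeds are only required to be independent in $G'$. Also, $C$ is not confined to $N(i)$: the vertex step and the core step add vertices adjacent to the current $C$. So seeds at distance two or more from $w$ can absorb $w$, and not even $d_w$ bounds how many there are. A packing argument might yield an $O_{\eps,\gamma}(\sqrt{t_w})$ multiplicity bound, but nothing in the proposal supplies one.

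\textbf{The fix.} The multiplicity bound is unnecessary. Your fallback is exactly the paper's proof, and it should be the main argument. Charge the call \contain$(i,G',\eps,\gamma)$ to the triangles at its own seed, not to the triangles it scans.
\begin{itemize}
\item Each vertex entering $C$ has degree at most $d_i/\eps^3$, hence $O_\eps(d_i^2)$ triangles.
\item By \Clm{size}, $|C| = O_{\eps,\gamma}(d_i)$, so all scanning costs $O_{\eps,\gamma}(d_i^3)$.
\item By \Clm{i-degree}, $t_i \geq \eps^2 d_i^2/2$, so this cost is $O_{\eps,\gamma}(t_i^{3/2})$.
\item Summing, $\sum_i t_i^{3/2} \leq \max_i\sqrt{t_i}\,\sum_i t_i = 3t\max_i\sqrt{t_i}$.
\end{itemize}
The identity $\sum_i t_i = 3t$ holds however the sets $C$ overlap, which is why overlap never has to be controlled.

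\textbf{Smaller points.}
\begin{itemize}
\item Batching core computations is unnecessary. The proof of \Clm{size} caps the red edges at $O_{\eps,\gamma}(d_i^2)$. The loop has $O_{\eps,\gamma}(d_i)$ iterations, since each non-final one enlarges $C$. So recomputing the core decomposition every iteration, as \contain{} prescribes, already costs $O_{\eps,\gamma}(d_i^3)$. This is preferable, because batching changes the execution: the red set depends non-monotonically on $|C|$.
\item Do not materialize $G''$ per seed. That costs $\Theta(m+n)$ per seed, and nothing in the bound can absorb it. Instead, filter by degree while scanning triangles.
\item Paying $\log n$ per heap update would add a $\log n$ factor to the $t\max_i\sqrt{t_i}$ term. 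Counters and thresholds only increase, so a vertex becomes eligible only when its counter is incremented; a lazily revalidated candidate stack therefore suffices. For red edges, bucket edges by counter value (at most $d_i/\eps^3$) and read off those at or above $\delta|C|$. This costs $O_\eps(d_i)$ plus the number of red edges per iteration. (The paper's own proof uses priority queues here and leaves this factor implicit.)
\end{itemize}
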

We note that triangle enumeration in sparse, real-world graphs is $O(m\alpha)$, where $\alpha$ is the degeneracy of the graph. 

\section{Empirical evaluation} \label{sec:prac}

We give a detailed empirical evaluation of \mainproc. 
We run our experiments on a 3.2 GHz Intel Xeon node with 128GB of memory. Our code is available at the anonymized repository \url{https://github.com/cluelessbasu/cohort-recovery}, and is written in C++, compiled with G++, and O3 optimizations.

\paragraph{Implementation differences:} 
We discuss important changes in the implemented algorithm. We forego the parameter $\gamma$
entirely, and use a greedy approach for growing the set $C$ in \contain. 
When a vertex $v$ is added to $C$, it potentially adds new triangles within $C$,
but also leads to new triangles that get cut by $C$. If the former is larger
than the latter, then we add $v$ to $C$. (The theoretical procedure in \contain\ uses
the threshold $\delta |C|^2$.) 

Edges are colored red according to the same criterion. For each edge, if the number
of incident triangles to $C$ is more than those outside $C$, we color the edge red.
Also, instead of computing various core numbers, we simply take the 2-core of the red edges.
We observed that this led to almost no loss in the output densities, but led to a significantly simpler algorithm.

Thus, we only have a single tunable parameter of $\eps$ for our algorithm; for our experimental results, $\eps$ is set to $0.1$. We elaborate on the role of $\gamma$ below, and provide an expanded discussion on $\eps$ in Section~\ref{sec:findings} and Figure~\ref{fig:ablation}.

While the theoretical construction stores the entire triangle list,
a common technique is to locally recompute triangles when required. We find this trick to significant reduce the running time in practice.

\subsection{Experimental setup} \label{sec:experiments}

We compare \mainproc\ with a number of existing algorithms.

\begin{asparaenum}
    \item \emph{RTRExtractor}: A triangle based dense cluster extraction algorithm~\cite{BaPeQi+24}. It produces disjoint subgraphs. We use the recommended setting of $\eps = 0.1$.
    \item \emph{CoDeSEG}: A recent algorithm~\cite{XLP+25} that uses structural entropy to detect overlapping clusters in graphs. We use the default settings for the algorithm.     
    \item \emph{Nucleus}: A hierarchical dense subgraph algorithm that produces potentially overlapping clusters~\cite{SaSePi+15}. We use the (2,3)-nuclei, which is also called the $k$-truss decomposition, and evaluate all clusters in its output hierarchy.
    \item \emph{Louvain}: One of the most popular community detection algorithms~\cite{Louvain}, it produces disjoint hierarchical clusters using a heuristic for modularity maximization. We use the default parameters in their source code~\cite{LouvainImp}.
    \item \emph{Leiden}: A refinement of modularity-based community detection designed to produce well-connected communities~\cite{Leiden}. 
    \item \emph{AGM}: A PageRank based algorithm for producing overlapping clusters in graphs, with an emphasis on distributed computing~\cite{AGM12}. We set the sole conductance parameter to $1000.$
\end{asparaenum}
\begin{table}
\centering

\begin{tabular}{|l|r|r|r|r|r||c|}
\hline
Dataset & $|V|$ & $|E|$ & CC & $\alpha$ & G & Rtime \\
\hline
euemail & 1.00 k & 16.06 k & 0.40 & 34 & \ding{51} & 204 ms \\
soc-ham & 2.43 k & 16.63 k & 0.54 & 24 & \ding{55} & 95 ms \\
caAstroPh & 18.77 k & 198.05 k & 0.63 & 56 & \ding{55} & 591 ms \\
Ph-Citations & 27.77 k & 352.28 k & 0.31 & 37 & \ding{55} & 554 ms \\
Epinions & 75.89 k & 405.74 k & 0.14 & 67 &\ding{55} & 911 ms \\
Slashdot & 82.17 k & 504.23 k & 0.06 & 55 &\ding{55} & 685 ms \\
dblp & 317.08 k & 1.05 M & 0.63 & 113 &\ding{51} & 683 ms \\
amazon & 334.86 k & 925.87 k & 0.40 & 6 &\ding{51} & 634 ms \\
Berkstan & 685.23 k & 6.65 M & 0.60 & 201 & \ding{55} & 5m 51s \\
hollywood & 1.07 M & 56.31 M & 0.77 & 2.2k &\ding{55} & 58m 48s \\
youtube & 1.13 M & 2.99 M & 0.08 & 51 &\ding{51} & 3.27 s \\
pokec & 1.63 M & 22.30 M & 0.11 & 47 &\ding{55} & 38.61 s \\
skitter & 1.70 M & 11.10 M & 0.26 & 111 &\ding{55} & 12.93 s \\
wiki & 1.79 M & 25.44 M & 0.27 & 99 &\ding{51} & 54.56 s \\
large-dblp & 1.82 M & 8.34 M & 0.63 & 286 &\ding{55} & 10.40 s \\
orkut & 3.07 M & 117.19 M & 0.17 & 253 &\ding{51} & 7m 39s \\
cit-Patents & 3.77 M & 16.52 M & 0.08 & 64 &\ding{55} & 19.26 s \\
livejournal & 4.00 M & 34.68 M & 0.28 & 360 &\ding{51} & 1m 5s \\
\hline
\end{tabular}

\caption{Datasets used in our experiments: vertices, edges, clustering coefficients and degeneracy. Datasets with annotated ground truth clusters are marked with a check(\ding{51}) while the others bear a cross(\ding{55}). In the last column, we mention the runtime for \mainproc{} excluding I/O. }
\label{tab:dataset_stats}
\end{table}

We experiment on a wide variety of public datasets, listed in \Tab{dataset_stats} 
(obtained from SNAP~\cite{snapnets} and the network repository~\cite{networkrepository}).
We consider all datasets to be undirected and remove self loops. 
In \Tab{gt-membership}, we give details about
curated ground truth clusters available for seven of
the datasets.

\begin{table}[ht]
    \centering
    \begin{tabular}{l|cc}
        \hline
        \textbf{Dataset} & \textbf{Frac. clustered} & \textbf{Frac. $> 1$ cluster} \\
        \hline
        wiki & 1.0 & 0.55  \\
        euemail & 1.0 & 0.0 \\
        amazon & 0.95 & 0.91 \\
        dblp & 0.82 & 0.35 \\
        orkut & 0.76 & 0.71  \\
        livejournal & 0.29 & 0.19 \\
        youtube & 0.05 & 0.02 \\
        \hline
    \end{tabular}
       
    \caption{A look at our labeled datasets. \emph{Frac. clustered} gives the fraction of the graph's vertices that appear in some labeled ground truth cluster. The last column, \emph{Frac. $> 1$ cluster}, gives the fraction of vertices that appear in multiple ground truth clusters. Observe that, barring two datasets, overlap is a significant aspect of the ground truth cluster structure.
    }  \label{tab:gt-membership}  
\end{table}

Throughout the rest of this section, we use the abbreviation "Coh"  for \mainproc{}. 
\begin{table*}
\resizebox{\textwidth}{!}{%
\begin{tabular}{|l|ccccccc|ccccccc|}
\hline
\textbf{Dataset} &\multicolumn{7}{c|}{\textbf{0.5}} & \multicolumn{7}{c|}{\textbf{0.8}} \\
\cline{2-15}
 & Coh & RTR & Cod & Nuc & Lou & Lei & AGM & Coh & RTR & Cod & Nuc & Lou & Lei & AGM \\
\hline
Ph-Citations
& \cellcolor{green!25}{0.585} & 0.317 & 0.004 & 0.118 & 0.054 & 0.056 & 0.000
& \cellcolor{green!25}{0.348} & 0.247 & 0.001 & 0.070 & 0.006 & 0.007 & 0.000 \\

Slashdot
& 0.020 & 0.006 & 0.000 & \cellcolor{green!25}{0.022} & 0.016 & 0.015 & 0.000
& \cellcolor{green!25}{0.010} & 0.005 & 0.000 & 0.009 & 0.001 & 0.001 & 0.000 \\

amazon
& \cellcolor{green!25}{0.489} & 0.096 & 0.063 & 0.355 & 0.248 & 0.245 & 0.002
& \cellcolor{green!25}{0.225} & 0.089 & 0.011 & 0.164 & 0.035 & 0.035 & 0.000 \\

caAstroPh
& \cellcolor{green!25}{0.654} & 0.472 & 0.013 & 0.194 & 0.266 & 0.273 & 0.038
& \cellcolor{green!25}{0.618} & 0.468 & 0.006 & 0.149 & 0.078 & 0.078 & 0.005 \\

cit-Patents
& \cellcolor{green!25}{0.116} & 0.051 & 0.007 & 0.119 & 0.029 & 0.028 & 0.000
& \cellcolor{green!25}{0.046} & 0.032 & 0.000 & 0.028 & 0.001 & 0.000 & 0.000 \\

dblp
& \cellcolor{green!25}{0.454} & 0.184 & 0.022 & 0.346 & 0.276 & 0.271 & 0.002
& \cellcolor{green!25}{0.396} & 0.184 & 0.004 & 0.252 & 0.071 & 0.070 & 0.000 \\

euemail
& \cellcolor{green!25}{0.588} & 0.350 & 0.000 & 0.085 & 0.063 & 0.061 & 0.000
& \cellcolor{green!25}{0.412} & 0.335 & 0.000 & 0.045 & 0.000 & 0.000 & 0.000 \\

hollywood
& \cellcolor{green!25}{0.833} & 0.770 & 0.020 & OOM & 0.322 & 0.326 & 0.138
& \cellcolor{green!25}{0.832} & 0.769 & 0.007 & OOM & 0.291 & 0.294 & 0.049 \\

large-dblp
& \cellcolor{green!25}{0.454} & 0.231 & 0.019 & 0.258 & 0.215 & 0.214 & 0.003
& \cellcolor{green!25}{0.415} & 0.231 & 0.006 & 0.201 & 0.068 & 0.068 & 0.001 \\

livejournal
& \cellcolor{green!25}{0.331} & 0.178 & 0.001 & 0.130 & 0.048 & 0.048 & 0.006
& \cellcolor{green!25}{0.242} & 0.164 & 0.000 & 0.083 & 0.009 & 0.009 & 0.002 \\

orkut
& \cellcolor{green!25}{0.349} & 0.242 & 0.000 & 0.092 & 0.003 & 0.003 & 0.001
& \cellcolor{green!25}{0.287} & 0.161 & 0.000 & 0.058 & 0.000 & 0.000 & 0.000 \\

pokec
& \cellcolor{green!25}{0.120} & 0.078 & 0.000 & 0.078 & 0.002 & 0.002 & 0.000
& \cellcolor{green!25}{0.091} & 0.066 & 0.000 & 0.041 & 0.000 & 0.000 & 0.000 \\

skitter
& \cellcolor{green!25}{0.075} & 0.026 & 0.000 & 0.032 & 0.018 & 0.018 & 0.000
& \cellcolor{green!25}{0.028} & 0.016 & 0.000 & 0.007 & 0.001 & 0.001 & 0.000 \\

Epinions
& \cellcolor{green!25}{0.060} & 0.029 & 0.000 & 0.030 & 0.017 & 0.017 & 0.000
& \cellcolor{green!25}{0.044} & 0.026 & 0.000 & 0.016 & 0.001 & 0.001 & 0.000 \\

soc-ham
& \cellcolor{green!25}{0.586} & 0.452 & 0.079 & 0.249 & 0.290 & 0.308 & 0.024
& \cellcolor{green!25}{0.484} & 0.435 & 0.055 & 0.215 & 0.155 & 0.157 & 0.016 \\

Berkstan
& \cellcolor{green!25}{0.317} & 0.167 & 0.040 & 0.081 & 0.110 & 0.112 & 0.029
& \cellcolor{green!25}{0.228} & 0.128 & 0.018 & 0.041 & 0.029 & 0.030 & 0.005 \\

wiki
& \cellcolor{green!25}{0.083} & 0.051 & 0.000 & 0.010 & 0.003 & 0.003 & 0.000
& \cellcolor{green!25}{0.061} & 0.044 & 0.000 & 0.006 & 0.001 & 0.001 & 0.000 \\

youtube
& 0.021 & 0.005 & 0.000 & \cellcolor{green!25}{0.024} & 0.014 & 0.014 & 0.000
& \cellcolor{green!25}{0.011} & 0.004 & 0.000 & 0.008 & 0.000 & 0.000 & 0.000 \\

\hline
\end{tabular}%
}
\caption{The coverage of \mainproc{} and competing methods across a variety of datasets, restricting to clusters with at least 5 vertices. We consider coverage thresholds of $0.5$ and $0.8$.}
\label{tab:cohorts-coverage}
\end{table*}
\subsection{Main findings} \label{sec:findings}

\paragraph{High coverage of \mainproc:} Arguably the main achievement of \mainproc{} is the ability to cover large portions of a graph with dense subgraphs. 
For every algorithm output, we remove sets that with at most $5$ vertices, as these are too small to be of interest. 
We measure
\emph{coverage at density $\rho$}. Consider a collection of sets $\CC$ where
every $C \in \CC$ induces a subgraph of density at least $\rho$. The coverage of $\CC$
is the fraction of vertices of $G$ contained in $\CC$. We consider coverage at densities
$0.5$ and $0.8$. We then
consider all sets of density at least $\rho$, and compare the coverage over various
algorithms (and repeat this for all the datasets). For $\rho = 0.5$, coverage for \mainproc\ is highest for all but three datasets, where it is a close second. 
In eleven of the eighteen chosen datasets, coverage at 0.5 for \mainproc\ is above 0.3. At $\rho =0.8$, \mainproc\ has the highest coverage on all eighteen datasets. We refer the reader to ~\Tab{cohorts-coverage} for the full data on coverage. We note that Nucleus runs into an OOM (out of memory) error on the hollywood dataset on our machine.

We note that \mainproc{} finishes on our machine on all of our datasets in less than an hour on all our datasets. In the last column of \Tab{dataset_stats}, we look at the total runtime of the algorithm excluding I/O.  We note that even for very large graphs, the total time taken is typically seconds to minutes. Section~\ref{app:runtime-exp} considers a detailed break down of the time taken by each component.

\paragraph{Capturing ground truth information:} Consider the datasets
with ground truth information: these are the seven datasets in \Tab{gt-membership}. Of these, we focus on the six datasets (except euemail) that have overlapping ground-truth clusters. We stress that this information can be quite noisy, and
is imperfect. Typically, these networks have category labels, and the connected components of vertices in some category are considered to be the "ground truth".  In some cases, like livejournal, the communities are real groups on social networks. Still, these ground truth clusters have information,
and we wish to see how well \mainproc{} can recover this information. We measure a ground-truth-oriented notion of \emph{MaxPrec}. Let $\cD$ and $\cG$ be the detected and ground-truth cluster collections, respectively. We first retain ground-truth clusters of size at least five,
$\cG_5=\{G\in\cG:|G|\geq 5\}$, and let $U_5=\bigcup_{G\in\cG_5}G$ be their labeled vertex universe. We retain raw detected clusters of size at least five, intersect each with $U_5$, and discard empty intersections: $\widetilde{\cD}_5 = \{D\cap U_5:D\in\cD,\ |D|\geq 5,\ D\cap U_5\neq\emptyset\}$.
The reported score is
\begin{align}
    \text{MaxPrec}(\cD,\cG)
    = \frac{1}{|\cG_5|}\sum_{G_j\in\cG_5}
      \max_{\widetilde D_i\in\widetilde{\cD}_5}
      \frac{|\widetilde D_i\cap G_j|}{|\widetilde D_i|}. \label{eq:maxprec}
\end{align}
Thus, for each retained ground-truth cluster, we find the retained intersection with the largest precision relative to that ground-truth cluster, and then average over ground-truth clusters. The size threshold is applied before intersection; a nonempty intersection remains eligible even if it has fewer than five vertices. Refer to \Tab{gt-bc}. We note that in five of our six datasets, \mainproc\ emerges the winner.

\begin{table}
    \centering
\small{
\begin{tabular}{lcccccc}
\toprule
\textbf{Method} & \textbf{youtube} & \textbf{dblp} & \textbf{amazon} & \textbf{wiki} & \textbf{livejournal} & \textbf{orkut} \\
\midrule
Cohorts & 0.296 & \cellcolor{green!25}{0.433} & \cellcolor{green!25}{0.400} & \cellcolor{green!25}{0.278} & \cellcolor{green!25}{0.653} & \cellcolor{green!25}{0.144} \\
Nucleus & 0.320 & 0.364 & 0.329 & 0.133 & 0.394 & 0.097 \\
CoDeSEG & 0.078 & 0.143 & 0.215 & 0.101 & 0.087 & 0.010 \\
AGM & 0.251 & 0.030 & 0.029 & 0.053 & 0.118 & 0.009 \\
RTREx & 0.132 & 0.293 & 0.195 & 0.275 & 0.530 & 0.128 \\
Louvain & \cellcolor{green!25}{0.684} & 0.374 & 0.325 & 0.103 & 0.388 & 0.014 \\
Leiden & 0.680 & 0.368 & 0.325 & 0.110 & 0.400 & 0.012 \\
\bottomrule
\end{tabular}}
\caption{Ground-truth-oriented MaxPrec across methods. 
Cells marked in green are the best for that dataset (per column).}\label{tab:gt-bc}
\end{table}

\paragraph{Ablation studies: }
We vary the parameter $\eps\in[0.05, 0.5]$ at increments of $0.05$, and run for all datasets. 
When we set $\eps \leq 0.01$, the edge removal step of \mainproc\ is ineffective, and the output sets have low density. 
(In the limit when $\eps = 0$, no edge is ever deleted, and the entire graph is returned as a single cluster.)
In \Fig{ablation}, we show the coverage with density $0.8$ across the values of $\eps$.
Coverage drops significantly as $\eps$ increases beyond $0.2$. The optimal value is between $0.1$ and $0.2$ for different
datasets. Overall, the setting of $\eps = 0.1$
works well for all datasets, and small changes of $\eps$ have almost no effect on the output. 
In Appendix~\ref{sec:ablation-app}, we also show results for coverage at density $0.5$.

\begin{figure}
    \centering
    \includegraphics[width=0.5\linewidth]{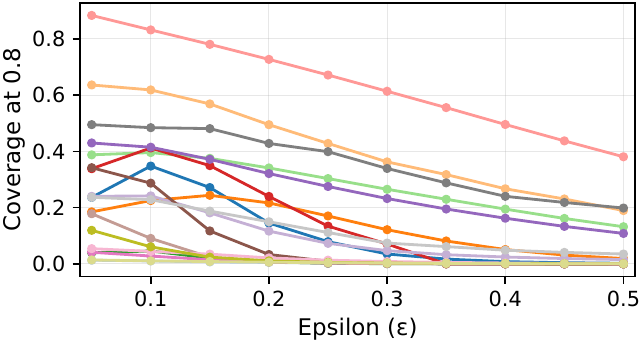}
    \includegraphics[width=0.7 \linewidth]{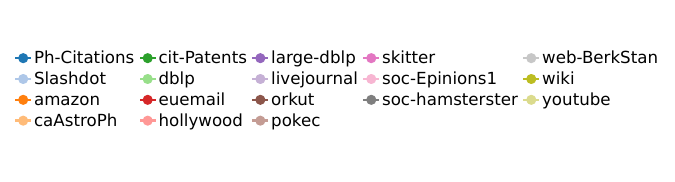}
        \vspace{-0.5cm}
    \caption{Coverage at 0.8 at different $\eps$. We observe that the coverage broadly decays with increasing $\eps$, but the maximum is typically between 0.05 and 0.2 for different datasets.}
    \label{fig:ablation}

\end{figure}

\paragraph{Recovery in planted models: }
We consider Stochastic Block Models with disjoint and overlapping clusters, so that
ground truth is known.
We test on 30 independently generated graph instances with blocks of 100 vertices. In the disjoint case, there are 20 blocks.
In the overlapping setting, we have 26 planted blocks in order, and each block overlaps with the "adjacent" (in the ordering)
blocks by 25 vertices. Two vertices in a block form an edge in with independently probability $p_{\mathrm{in}}$ (set to 0.7 or above).
All other pairs form an edge with probability $p_{\mathrm{out}}$, fixed to $0.1$.

For space reasons, we provide results in the appendix, in \Tab{sbm-app}.
For each output cluster of size at least 5, we compute the MaxPrec. We also compute the recall for every block.
For sake of space, we only show results on Nucleus and CoDeSEG. (Other methods are comparable, or significantly worse.)
For the simple disjoint case, both \mainproc\ and Nucleus have near perfect recall and MaxPrec. CoDeSEG has significantly lower
MaxPrec, showing that many ground truth clusters are not captured by its output.
For the overlapping case, the performance of CoDeSEG is much worse. We note that Nucleus has near perfect recall,
better than \mainproc. But the MaxPrec values are much lower, again showing that many ground truth clusters
are not covered properly.

\paragraph{Examples of dense subgraphs:} We apply our algorithm to a labeled citation network of computer science papers, sourced from AMiner~\cite{aminer}. 
In \Tab{aminer-tables} of the appendix, we show an example of three intersecting output sets. Each cluster is on a coherent topic in reconfigurable systems, but they also overlap significantly. We found tens of thousands of clusters on this dataset. This is a striking example of the importance of allowing overlaps. Such clusters would be incomplete in a true partition.

\printbibliography
\clearpage
\appendix
\section{Appendix}
 
\subsection{\texorpdfstring{Proofs from \Sec{prelims}}{Proofs from Preliminaries}} \label{app:prelims}
 
We give the proof of \Lem{kcore}.
 
\begin{proof} Let $S$ denote a set of vertices whose induced subgraph has average degree at least $r$.
Hence, $S$ contains at least $r|S|/2$ edges.
Consider the Matula-Beck procedure that repeatedly removes the vertex of minimum degree. Let us
stop it before it removes a vertex of degree at least $r/2$. So at this point, every vertex removal
removes strictly less than $r/2$ edges from the graph. There can be strictly less than $r|S|/2$
edges removed from the subgraph induced on $S$. Hence, there is at least some edge remaining
at this point of the procedure, so there is a remaining subgraph with minimum degree at least $r/2$.
\end{proof}
 
We give the proof of \Clm{reverse}.
 
\begin{proof} 
\begin{align}
\EX[X] = & \Pr[X \leq \EX[X]/2] \EX[X | X \leq \EX[X]/2] \\
& + \Pr[X > \EX[X]/2] \EX[X | X > \EX[X]/2] \\
\leq & \EX[X]/2 + \Pr[X > \EX[X]/2] M
\end{align}
We rearrange to complete the proof.
\end{proof}
 
\subsection{\texorpdfstring{Proofs from \Sec{contain}}{Proofs from Cohort Containment}} \label{app:contain}
 
We give the proof of \Clm{i-degree}.
 
\begin{proof} First, consider $G'$ from \mainproc. The vertex $i$ has non-zero degree in $G'$, since all isolated vertices
are removed in \mainproc.
Each edge $(i,v)$ must participate
in at least $\eps \max(d_i, d_v)$ triangles. The number of these triangles
is at most $\min(d_i, d_v)$. Hence, $d_v \leq d_i/\eps$ and $v$ (indeed all neighbors
of $i$ in $G'$) is in $G''$. The existence of at least $\eps d_i$ triangles incident to $i$
means that the degree of $i$ in $G''$ is at least $\eps d_i$. Every edge
incident to $i$ participates in at least $\eps d_i$ triangles in $G'$; the other
vertices in the triangle have degree at most $d_i/\eps$, and are therefore in $G''$.
So all these (at least) $\eps d_i$ triangles are in $G''$.
\end{proof}
 
\begin{figure}[ht!]
    \centering
    \includegraphics[width = \linewidth]{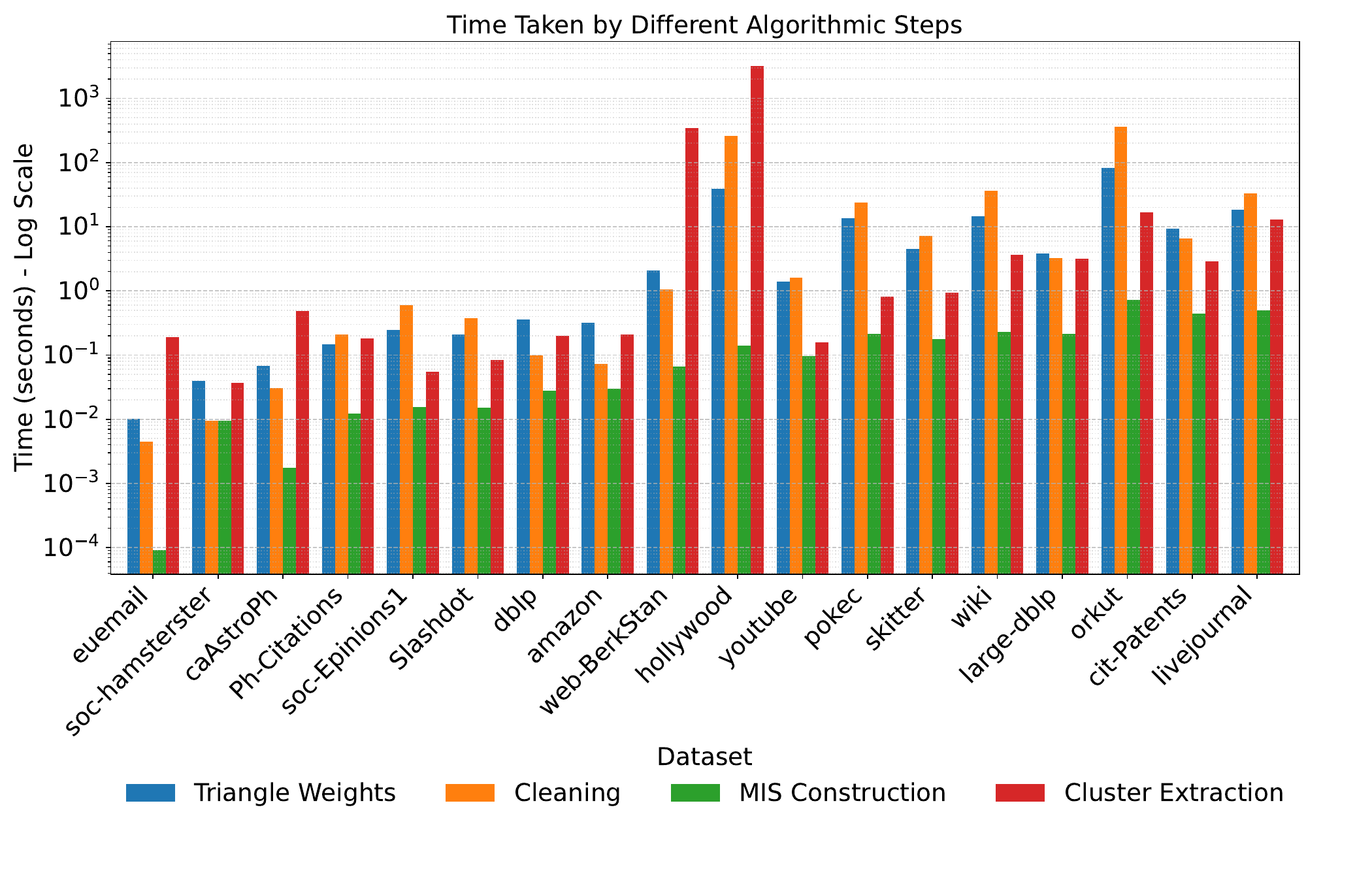}
    \caption{The time taken by various crucial steps of Cohorts.}
    \label{fig:runtime}
\end{figure}

\subsection{Runtime Analysis}\label{app:runtime-thm}
 
We prove \Thm{runtime}.
 
\begin{proof} The cleaning operation in \Step{clean} has a standard procedure and 
analysis~\cite{GuRoSe14,BBS24,BaPeQi+24}. We first enumerate all triangles in $R$ time,
and store them in a data structure indexed by vertex and edge. For every edge $(u,v)$,
we can keep track of $t_{u,v}/\max(d_u, d_v)$ (where $t_{u,v}$ is the triangle count of edge $(u,v)$)
in a priority queue, and update every time a new
triangle is found. Finally, we remove edges with a triangle count below $\eps \max(d_u, d_v)$,
and update the data structure by also deleting triangles participating on the removed edge.
The total running time is $O(R + (m+n+t)\log n) = O((m+n+R)\log n)$. We store
all triangles, so the storage is $O(m+n+t)$.
 
The interesting analysis is bounding the running time of \contain$(i,G',\eps,\gamma)$,
for every $i$ with non-zero degree in $G'$. We will show that the running time is $O(d^3_i)$.
We maintain a data structure that stores all triangles incident to at least one vertex in $C$.
These triangles will be indexed by vertex and edge. When a vertex enters $C$, we enumerate
all triangles it participates in to update this data structure. Every vertex has degree at most $d_i/\eps^3$,
so the time to find all these triangles is $O_\eps(d^2_i)$. By \Clm{size}, there are $O_{\eps,\gamma}(d_i)$ vertices in $C$. So the total time for all this enumeration is $O_{\eps,\gamma}(d^3_i)$.
 
Recall that we store vertices and edges in a priority queue keyed with the number of triangles incident
to $C$. So it is straightforward to find vertices to add to $C$, or to color edges red. All the
red edges are stored in an auxiliary graph, and we can compute the core decomposition of this
graph in linear time. The number of edges is at most the number of triangles seen thus far,
which (by the previous paragraph) is $O_{\eps,\gamma}(d^3_i)$. So the core decomposition
takes the same time. Observe that each time a core of degree at least $\delta|C|$ is discovered,
at least $\Omega_{\eps, \gamma}(d_i)$ vertices are added to $C$ (recall $\delta = \Theta_{\eps,\gamma}(1)$). Since $|C| = O_{\eps,\delta}(d_i)$, this operation can only happen $O_{\eps,\gamma}(1)$ times.
So overall, the running time is $O_{\eps,\gamma}(d^3_i)$.
Note that, by the analysis in the proof of \Clm{size}, $i$ participates in $\Omega_{\eps}(d^2_i)$
triangles. Hence, the running time will be $O_{\eps\gamma}((t^{3/2}_i))$.
 
We now sum over all $i$ to get the total running time for all calls to \contain. 
We get $\sum_{i \in V} t^{3/2}_i \leq \max_i \sqrt{t_i} \sum_i t_i = t \max_i \sqrt{t_i}$.
That completes the overall proof.
\end{proof}
 
\subsection{Empirical running time breakdown}\label{app:runtime-exp}
We discuss more on the empirical running time, by 
breaking down the time taken by each component of \mainproc. \Fig{runtime}
gives the results across all the datasets.
Typically, the initial cleaning takes the most time, but in some cases, cluster extraction is the biggest component. The two datasets that take disproportionately more time for their size (Berkstan and hollywood) are ones with notably large clusters.
We note that hollywood has larger clusters produced compared to the other datasets, which may explain the higher running time.
 
\subsection{Examples of dense subgraphs} \label{sec:app-example}
 
As described in the main body, we apply \mainproc\ to a labeled citation network of computer science papers, obtained from AMiner~\cite{aminer}. 
We get tens of thousands on clusters, many of which are semantically meaningful.
\Tab{aminer-tables} gives an example of three intersecting output sets. 
Each cluster is on a coherent topic in reconfigurable systems, which has non-trivial
overlap denoted in purple).

\begingroup
\setlength{\intextsep}{0pt}
\begin{table}[H]
    \centering
    \scriptsize 
    \setlength{\tabcolsep}{2pt} 
    \renewcommand{\arraystretch}{1.05}
 
       \vspace{0pt} 
        \centering
        \begin{tabular}{|p{0.38\textwidth}|l|}
            \hline
            \textbf{Publication Name} & \textbf{Venue} \\
            \hline
            HW/SW partitioning \& code gen. of embedded control apps & CODES '02 \\ \hline
            A DAG-based design approach for reconfig. VLIW processors & DATE '99 \\ \hline
            Instr. gen. for hybrid reconfig. systems & TODAES '02 \\ \hline
            Synthesis of custom processors based on extensible platforms & ICCAD '02 \\ \hline
            The MOLEN Polymorphic Processor & IEEE TC '04 \\ \hline
            Instr. gen. and regularity extraction for reconfig. processors & CASES '02 \\ \hline
            Automatic gen. of app-specific processors & CASES '03 \\ \hline
            \rowcolor{violet!20}
            Processor Acceleration Through Automated Instr. Set Customization & MICRO '03 \\ \hline
            Integer linear programming approach for identifying instr.-set extensions & CODES+ '05 \\ \hline
            \rowcolor{violet!20}
            Scalable custom instr. identification for instr.-set extensible processors & CASES '04 \\ \hline
            \rowcolor{violet!20}
            App-specific instr. gen. for configurable processor arch. & FPGA '04 \\ \hline
            Automatic selection of app-specific instr.-set extensions & CODES+ '06 \\ \hline
            Linear complexity algo. for gen. of MISO instr. of variable size & SAMOS '07 \\ \hline
            Linear complexity algo. for auto. gen. of convex MIMO instr. & ARC '07 \\ \hline
            \rowcolor{violet!20}
            Modern dev. methods \& tools for embedded reconfig. systems: A survey & Integ. '10 \\ \hline
        \end{tabular}
        \begin{tabular}{|p{0.38\textwidth}|l|}
            \hline
            \textbf{Publication Name} & \textbf{Venue} \\
            \hline
            Designing domain-specific processors & CODES '01 \\ \hline
            Xtensa: A Configurable and Extensible Processor & Micro '00 \\ \hline
            Efficient instr. encoding for auto. instr. set design of config. ASIPs & ICCAD '02 \\ \hline
            Characterizing embedded apps for instr.-set extensible processors & DAC '04 \\ \hline
            \rowcolor{violet!20}
            Processor Acceleration Through Automated Instr. Set Customization & MICRO '03 \\ \hline
            Instr.-set customization for real-time embedded systems & DATE '07 \\ \hline
            Satisfying real-time constraints with custom instr. & CODES+ '05 \\ \hline
            \rowcolor{violet!20}
            Scalable custom instr. identification for instr.-set extensible processors & CASES '04 \\ \hline
            \rowcolor{violet!20}
            App-specific instr. gen. for configurable processor arch. & FPGA '04 \\ \hline
            Evaluating design trade-offs in customizable processors & DAC '09 \\ \hline
        \end{tabular}
 
        \vspace{0.5em}
 
        \begin{tabular}{|p{0.38\textwidth}|l|}
            \hline
            \textbf{Publication Name} & \textbf{Venue} \\
            \hline
            Kernel scheduling in reconfig. computing & DATE '99 \\ \hline
            Framework for Scheduling \& Context Allocation in Reconfig. Computing & ISSS '99 \\ \hline
            Temporal Partitioning \& Scheduling for Reconfig. Computing & FCCM '98 \\ \hline
            Arch. Synthesis Techniques for Dynamically Reconfig. Logic & FPL '96 \\ \hline
            Framework for reconfig. computing: task scheduling \& context mgmt & TVLSI '01 \\ \hline
            Config. mgmt in multi-context reconfig. systems... & ISSS '00 \\ \hline
            \rowcolor{violet!20}
            Modern dev. methods \& tools for embedded reconfig. systems: A survey & Integ. '10 \\ \hline
        \end{tabular}
\caption{Intersection analysis showing coherent sub-fields in the overlap: reconfigurable architecture generation (top), instruction set customization (middle), and task scheduling (bottom). Rows highlighted in violet are papers in the intersection.}\label{tab:aminer-tables}
    
\end{table}

\begin{figure}[H]
    \centering
    \includegraphics[width=0.46\linewidth, height=0.28\textheight, keepaspectratio]{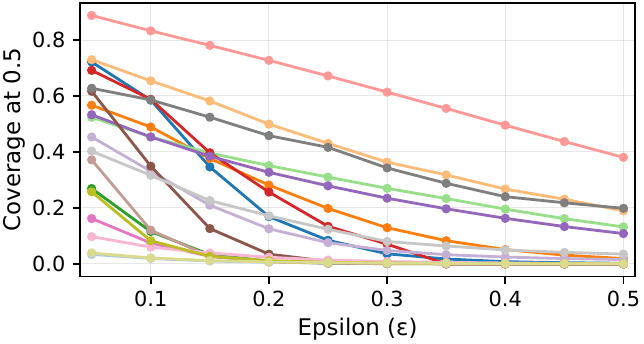}
    \includegraphics[width=0.46\linewidth, height=0.28\textheight, keepaspectratio]{newfigs/coverage_trends_regular_mis_rho0.8.pdf}
    \includegraphics[width=0.7\linewidth]{newfigs/coverage_trends_regular_mis_index.pdf}
    \caption{Coverage at 0.5 (top) and 0.8 (bottom) at different thresholds for varying $\eps$. Both show sharp drops in coverage after $\eps=0.2$, but the drop for coverage at 0.5 is noticeably higher.}
    \label{fig:ablation-app}
\end{figure}
\endgroup

\begin{table*}[b!]
\centering
\setlength{\tabcolsep}{3.2pt}
\resizebox{\textwidth}{!}{
\begin{tabular}{l|ccc|ccc|ccc}
\toprule
& \multicolumn{3}{c|}{Coh}
& \multicolumn{3}{c|}{\textbf{Nucleus}}
& \multicolumn{3}{c}{\textbf{CoDeSEG}} \\
$p_{\mathrm{in}}$
& \#Output$\geq 5$ & Recall & MaxPrec
& \#Output$\geq 5$ & Recall & MaxPrec
& \#Output$\geq 5$ & Recall & MaxPrec \\
\midrule
0.70
& $90.03 \pm 3.16$ & $1.000 \pm 0.000$ & $1.000 \pm 0.000$
& $96.57 \pm 6.38$ & $1.000 \pm 0.000$ & $1.000 \pm 0.000$
& $5.13 \pm 1.07$ & $0.869 \pm 0.028$ & $0.216 \pm 0.050$ \\
0.80
& $72.40 \pm 2.94$ & $1.000 \pm 0.000$ & $1.000 \pm 0.000$
& $81.87 \pm 6.00$ & $1.000 \pm 0.000$ & $1.000 \pm 0.000$
& $6.50 \pm 1.17$ & $0.867 \pm 0.026$ & $0.283 \pm 0.055$ \\
0.90
& $54.70 \pm 2.49$ & $1.000 \pm 0.000$ & $1.000 \pm 0.000$
& $60.93 \pm 4.49$ & $1.000 \pm 0.000$ & $1.000 \pm 0.000$
& $6.93 \pm 1.76$ & $0.875 \pm 0.025$ & $0.300 \pm 0.081$ \\
\bottomrule
\end{tabular}
}
\resizebox{\textwidth}{!}{
\begin{tabular}{l|ccc|ccc|ccc}
\toprule

$p_{\mathrm{in}}$
& \#Output$\geq 5$ & Recall & MaxPrec
& \#Output$\geq 5$ & Recall & MaxPrec
& \#Output$\geq 5$ & Recall & MaxPrec \\
\midrule
0.70
& $101.00 \pm 3.05$ & $0.927 \pm 0.006$ & $1.000 \pm 0.000$
& $34.47 \pm 2.27$ & $1.000 \pm 0.000$ & $0.434 \pm 0.042$
& $5.80 \pm 1.40$ & $0.894 \pm 0.026$ & $0.247 \pm 0.059$ \\
0.80
& $81.60 \pm 3.57$ & $0.934 \pm 0.005$ & $1.000 \pm 0.000$
& $39.47 \pm 2.84$ & $1.000 \pm 0.000$ & $0.446 \pm 0.047$
& $6.60 \pm 1.63$ & $0.901 \pm 0.020$ & $0.283 \pm 0.062$ \\
0.90
& $62.53 \pm 2.91$ & $0.961 \pm 0.003$ & $1.000 \pm 0.000$
& $41.90 \pm 2.44$ & $1.000 \pm 0.000$ & $0.457 \pm 0.045$
& $6.73 \pm 1.87$ & $0.902 \pm 0.019$ & $0.289 \pm 0.072$ \\
\bottomrule
\end{tabular}
}
\caption{Recovery at fixed $p_{\mathrm{out}}=0.10$ on disjoint (top) and overlapping (bottom) SBMs. Values are mean $\pm$ sample standard deviation over 30 instances. \#Output$\geq 5$ is the number of output clusters of size at least five per instance. Recall and MaxPrec are averaged over planted blocks within each randomized instance, then averaged across instances.}
\label{tab:sbm-app}
\end{table*}

This is a striking example of the importance of allowing overlaps. Such clusters would be incomplete in a true partition.

\subsection{Studies with varying \texorpdfstring{$\eps$}{epsilon}} \label{sec:ablation-app}
 
(For convenience, we restate some text from the main body.)
We present the results for coverage at varying $\eps$ values.
We vary the parameter $\eps\in[0.05, 0.5]$ at increments of $0.05$, and run for all datasets. 
When we set $\eps \leq 0.01$, the edge removal step of \mainproc\ is ineffective, and the output sets have low density. 
(In the limit when $\eps = 0$, no edge is ever deleted, and the entire graph is returned as a single cluster.)
In \Fig{ablation-app}, we show the coverage with density $0.8$ and $0.5$ across the values of $\eps$.
Across all plots, coverage drops significantly as $\eps$ increases beyond $0.2$. 
For density $0.5$, it appears to be mostly monotonic, and smaller $\eps$ is generally better.
But for density $0.8$, there are numerous instances where coverage increases
by small increases (between $0.1$ and $0.2$) in $\eps$.
Overall, the setting of $\eps = 0.1$
works well for all datasets, and small changes of $\eps$ have almost no effect on the output. 
So we simply set $\eps = 0.1$ for all of our main experiments.

\subsection{Results on SBMs} \label{sec:sbm-app}

(For convenience, we restate part of the main-text discussion.)

We evaluate on Stochastic Block Models (SBMs) with known ground truth under both disjoint and overlapping cluster structure. Each setting uses 30 independently generated graphs with planted blocks of size 100.

In the disjoint setting, we plant 20 blocks for a total of 2000 vertices.
In the overlapping setting, we plant 26 blocks in sequence, where each block overlaps with its adjacent blocks by 25 vertices, adding up to 1975 vertices.
For any pair of vertices within the same planted block, an edge is included independently with probability $p_{\mathrm{in}} \ge 0.7$; all other pairs connect independently with probability $p_{\mathrm{out}}=0.1$.

Table~\ref{tab:sbm-app} reports results. 
Values are mean $\pm$ sample standard deviation over 30 graph instances. 
\#Output$\geq 5$ denotes the number of output clusters of size at least 5. 
For evaluation, we compute MaxPrec (Eq.~\ref{eq:maxprec}) for each output cluster, and recall (fraction recovered by its best-matching output cluster); we then average these quantities at the instance level and report their mean across instances. All metrics are computed restricting to output clusters with at least 5 vertices.

For space, we report Nucleus and CoDeSEG (other methods are comparable or weaker). 
On disjoint SBMs, \mainproc\ and Nucleus both achieve near-perfect recall and MaxPrec, while CoDeSEG has substantially lower MaxPrec, indicating that many planted blocks are not cleanly recovered. 
On overlapping SBMs, CoDeSEG degrades further. 
Nucleus attains near-perfect recall, slightly above \mainproc, but its much lower MaxPrec indicates that many output clusters mix multiple planted blocks rather than isolating them well. We also observe output redundancy for both \mainproc\ and Nucleus, reflected by \#Output$\geq 5$; for \mainproc, this count decreases as $p_{\mathrm{in}}$ increases in both disjoint and overlapping settings.

Coding agents were used to generate and run the simulations for the planted models.

\end{document}